\definecolor{linkColor}{RGB}{156,78,13}
\tikzset{snake it/.style={decorate, decoration=snake}}
\definecolor{parchment}{RGB}{214, 204, 169}
\theoremstyle{plain}
\newtheorem{theorem}{Theorem}[section]
\newtheorem{corollary}[theorem]{Corollary}
\newtheorem{lemma}[theorem]{Lemma}
\theoremstyle{definition}
\newtheorem{definition}[theorem]{Definition}
\newtheorem{claim}[theorem]{Claim} 
\newcommand{\typesetoperator}[1]{\mathbf{#1}}
\DeclareMathOperator{\imw}{\typesetoperator{imw}}
\DeclareMathOperator{\vimw}{\typesetoperator{vimw}}
\DeclareMathOperator{\reach}{\typesetoperator{reach}}
\DeclareMathOperator{\cost}{\typesetoperator{cost}}
\newcommand{\typesetproblem}[1]{\textsc{#1}}
\DeclareMathOperator{\starexp}{\typesetproblem{StarExp}}
\DeclareMathOperator{\tempeuler}{\typesetproblem{TempEuler}}
\DeclareMathOperator{\minreachdelete}{\typesetproblem{MinReachDelete}}
\newcommand{\col}{$3$-\typesetproblem{Coloring}\xspace}
\DeclareMathOperator{\p}{\typesetproblem{P}}
\DeclareMathOperator{\np}{\typesetproblem{NP}}
\DeclareMathOperator{\fpt}{\typesetproblem{FPT}}
\title{\fontfamily{lmss}\selectfont  Edge exploration of temporal graphs.}
\author{Benjamin Merlin Bumpus\thanks{Supported by an EPSRC doctoral training account.} $^{}$ \thanks{Funded by the European Research Council (ERC) under the European Union's Horizon 2020 research and innovation programme (grant agreement No 803421, ReduceSearch).} ~\& Kitty Meeks\thanks{Supported by a Royal Society of Edinburgh Personal Research Fellowship, funded by the Scottish Government, and EPSRC grant EP/T004878/1.}}
\date{Last compilation: \today}
\begin{document}

\maketitle

\begin{abstract}
We introduce a natural temporal analogue of Eulerian circuits and prove that, in contrast with the static case, it is $\np$-hard to determine whether a given temporal graph is temporally Eulerian even if strong restrictions are placed on the structure of the underlying graph and each edge is active at only three times. However, we do obtain an $\fpt$-algorithm with respect to a new parameter called \emph{interval-membership-width} which restricts the times assigned to different edges. Our techniques also allow us to resolve two open questions of Akrida, Mertzios and Spirakis [CIAC 2019] concerning a related problem of exploring temporal stars. Furthermore, we introduce a vertex-variant of interval-membership-width (which can be arbitrarily larger than its edge-counterpart) and use it to obtain an $\fpt$-time algorithm for a natural \emph{vertex}-exploration problem that remains hard even when interval-membership-width is bounded.
\end{abstract}

\section{Introduction}\label{sec_edge_expl:intro}
Many real-world problems can be formulated and modeled in the language of graph theory. However, real-world networks are often not \emph{static}. They change over time and their edges may appear or disappear (for instance friendships may change over time in a social network). Such networks are called \emph{dynamic} or \emph{evolving} or \emph{temporal} and their structural and algorithmic properties have been the subject of active study in recent years \cite{Akrida2017,ArnaudtempSurvey,Himmel2017,HOLME201297,OthonTempSurvey}.
Some of the most natural and most studied topics in the theory of temporal graphs are: temporal walks (in which consecutive edges appear at increasing times), temporal paths and the corresponding notions of temporal reachability \cite{AKRIDA202065,axiotis2016size,temp_conn_cpts,temporalFeedbackVertexSet,mertzios2019temporal,KEMPE2002820,temp_path_comput,xuan2003computing}. Related to these notions is the study of explorability of a temporal graph which asks whether it is possible to visit all vertices or edges of a temporal graph via some temporal walk. 

Temporal vertex-exploration problems (such as temporal variants of the Travelling Salesman problem) have already been thoroughly studied \cite{starexp,OnTemporalGraphExploration,OthonTSP}. In contrast, here we focus on temporal \emph{edge}-exploration and specifically we study \emph{temporally Eulerian graphs}. Informally, these are temporal graphs admitting a temporal circuit that visits every edge at exactly one time (i.e. a temporal circuit that yields an Euler circuit in the underlying static graph). 

Deciding whether a static graph is Eulerian is a prototypical example of a polynomial time solvable problem. In fact this follows from Euler's characterization of Eulerian graphs dating back to the 18$^{\text{th}}$ century \cite{eulerBridges}. In contrast, here we show that, unless $\p = \np$, a characterization of this kind cannot exist for temporal graphs. In particular we show that deciding whether a temporal graph is \emph{temporally Eulerian} is $\np$-complete even if strong restrictions are placed on the structure of the underlying graph and each edge is active at only three times.

The existence of problems that are tractable on static graphs, but $\np$-complete on temporal graphs is well-known \cite{starexp,ArnaudtempSurvey,mertzios2019computing,OthonTempSurvey}. In fact there are examples of problems whose temporal analogues remain hard even on trees \cite{starexp,mertzios2019computing}. Thus the need for parameters that take into account the temporal structure of the input is clear. Some measures of this kind (such as temporal variants of feedback vertex number~\cite{temporalFeedbackVertexSet} and tree-width~\cite{temp_tw}) have already been studied. Unfortunately we shall see that these parameters will be of no use to us since the problems we consider here remain $\np$-complete even when these measures are bounded by constants on the underlying static graph. To overcome these difficulties, we introduce a new purely-temporal parameter called \emph{interval-membership-width}. Parameterizing by this measure we find that the problem of determining whether a temporal graph is temporally Eulerian is in $\fpt$. 

Temporal graphs of low interval-membership-width are `temporally sparse' in the sense that only few edges are allowed to appear both before and after any given time. We point out that this parameter does \emph{not} depend on the structure of the underlying static graph, but it is instead influenced only by the temporal structure. We believe that interval-membership-width will be a parameter of independent interest for other temporal graph problems in the future.

It turns out that our study of temporally Eulerian graphs is closely related to a temporal variant of the Travelling Salesman Problem concerning the exploration of temporal stars via a temporal circuit which starts at the center of the star and which visits all leaves. Akrida, Mertzios and Spirakis introduced this problem and proved it to be $\np$-complete even when the input is restricted to temporal stars in which every edge has at most $k$ appearances for all $k \geq 6$~\cite{starexp}. Although they also showed that the problem is polynomial-time solvable whenever each edge of the input temporal star has at most $3$ appearances, they left open the question of determining the hardness of the problem when each edge has at most $4$ or $5$ appearances. We resolve this open problem in the course of proving our results about temporally Eulerian graphs. Combined with Akrida, Mertzios and Spirakis' results, this gives a complete dichotomy: their temporal star-exploration problem is in $\p$ if each edge has at most $3$ appearances and is $\np$-complete otherwise.

As a potential `island of tractability', Akrida, Mertzios and Spirakis proposed to restrict the input to their temporal star-exploration problem by requiring consecutive appearances of the edges to be evenly spaced (by some globally defined spacing). Using our new notion of interval-membership-width we are able to show that this restriction does indeed yield tractability parameterized by the maximum number of times per edge (thus partially resolving their open problem). Furthermore, we show that a slightly weaker result also holds for the problem of determining whether a temporal graph is temporally Eulerian in the setting with evenly-spaced edge-times. 

Given the success of parameterizations by interval-membership-width when it comes to temporal edge-exploration problems, we then study the related questions of vertex exploration and reachablity. We show that $\minreachdelete$ (a natural vertex-reachability problem) remains hard even on classes of bounded interval-membership-width. This motivates us to introduce a `vertex-variant' of our measure called \emph{vertex-interval-membership-width}. This new measure -- which is lower-bounded by interval-membership-width  -- is more algorithmically powerful: parameterizing by the vertex-variant of our measure puts $\minreachdelete$ in $\fpt$.

\textbf{Outline.} We fix notation and provide background definitions in Section \ref{sec_edge_expl:notation}. We prove our hardness results in Section \ref{sec_edge_expl:hardness}. Section \ref{sec_edge_expl:fpt_algs} contains the definition of interval-membership-width as well as our $\fpt$ algorithms parameterized by this measure. In Section \ref{sec_edge_expl:times_close_together} we show that Akrida, Mertzios and Spirakis' temporal star-exploration problem is in $\fpt$ parameterized by the maximum number of appearances of any edge in the input whenever the input temporal star has evenly-spaced times on all edges. We also show a similar result for our temporally Eulerian problem. In Section~\ref{sec:vertex_version} we introduce a vertex-variant of interval-membership-width and we discuss its algorithmic applications to vertex-exploration and reachability problems. Finally we provide concluding remarks and open problems in Section~\ref{sec_edge_expl:conclusion}.

\section{Background and notation}\label{sec_edge_expl:notation}
For any graph-theoretic notation not defined here, we refer the reader to Diestel's textbook \cite{Diestel2010GraphTheory}; similarly, for any terminology in parameterized complexity, we refer the reader to the textbook by Cygan et al. \cite{cygan2015parameterized}.

The formalism for the notion of dynamic or time-evolving graphs which we adhere to originated from the work of Kempe, Kleinberg, and Kumar \cite{KEMPE2002820}. Formally, if $\tau$ is a function {$\tau: E(G) \to 2^{\mathbb{N}}$} mapping edges of a graph $G = (V(G), E(G))$ to sets of integers, then we call the pair $\mathcal{G} := (G, \tau)$ a \emph{temporal graph}. We shall assume all temporal graphs to be finite and simple in this paper.

For any edge $e$ in $G$, we call the set $\tau(e)$ the \emph{time-set} of $e$. For any time $t \in \tau(e)$ we say that $e$ is \emph{active} at time $t$ and we call the pair $(e,t)$ a \emph{time-edge}. The set of all edges active at any given time $t$ is denoted $E_t(G, \tau) := \{e \in E(G): t \in \tau(e)\}$. The latest time $\Lambda$ for which $E_{\Lambda}(G, \tau)$ is non-empty is called the \emph{lifetime} of a temporal graph $(G, \tau)$ (or equivalently $\Lambda := \max_{e \in E(G)}\max \tau(e)$). Here we will only consider temporal graphs with finite lifetime.

In a temporal graph there are two natural notions of walk: one is the familiar notion of a walk in static graphs and the other is a truly temporal notion where we require consecutive edges in walks to appear at non-decreasing times. Formally, given vertices $x$ and $y$ in a temporal graph $\mathcal{G}$, a \emph{temporal $(x,y)$-walk} is a sequence $W = (e_1,t_1), \ldots, (e_n,t_n)$ of time-edges such that $e_1, \ldots, e_n$ is a walk in $G$ starting at $x$ and ending at $y$ and such that $t_1 \leq t_2 \leq \ldots \leq t_n$. If $n > 1$, we denote by $W - (e_n,t_n)$ the temporal walk $(e_1,t_1), \ldots, (e_{n-1},t_{n-1})$. We call a temporal $(x,y)$-walk \emph{closed} if $x = y$ and we call it a \emph{strict temporal walk} if the times of the walk form a strictly increasing sequence. Hereafter we will assume all temporal walks to be strict.

Recall that an Euler circuit in a static graph $G$ is a circuit $e_1\ldots,e_m$ which traverses every edge of $G$ exactly once. In this section we are interested in the natural temporal analogue of this notion. We point out that, independently and simultaneously to our work here, Marino and Silva also studied temporal variants of the problem of deciding whether a graph is Eulerian~\cite{marino2021k}. 

\begin{definition}
A \emph{temporal Eulerian circuit} in a temporal graph $(G, \tau)$ is a closed temporal walk $(e_1,t_1),\dots,(e_m,t_m)$ such that $e_1\ldots,e_m$ is an Euler circuit in the underlying static graph $G$. If there exists a temporal Eulerian circuit in $(G, \tau)$, then we call $(G, \tau)$ \emph{temporally Eulerian}.
\end{definition}

Note that if $(G, \tau)$ is a temporal graph in which every edge appears at exactly one time, then we can determine whether $(G, \tau)$ is temporally Eulerian in time linear in $\lvert E(G)\rvert$. To see this, note that, since every edge is active at precisely one time, there is only one candidate ordering of the edges (which may or may not give rise to an Eulerian circuit). Thus it is clear that the number of times per edge is relevant to the complexity of the associated decision problem -- which we state as follows. 

\begin{framed}
\noindent
\emph{\textbf{$\tempeuler(k)$}}\\
\textbf{Input:} A temporal graph $(G, \tau)$ where $\lvert  \tau(e) \rvert \leq k$ for every edge $e$ in the graph $G$.\\
\textbf{Question:} Is $(G, \tau)$ \emph{temporally Eulerian}?
\end{framed}

As we mentioned in Section \ref{sec_edge_expl:intro}, here we will show that $\tempeuler(k)$ is related to an analogue of the Travelling Salesman problem on temporal stars \cite{starexp}. This problem (denoted as $\starexp(k)$) was introduced by Akrida, Mertzios and Spirakis \cite{starexp}. It asks whether a given temporal star $(S_n, \tau)$ (where $S_n$ denotes the $n$-leaf star) with at most $k$ times on each edge admits a closed temporal walk starting at the center of the star and which visits every leaf of $S_n$. We call such a walk an \emph{exploration} of $(S_n,\tau)$. A temporal star that admits an exploration is called \emph{explorable}. Formally we have the following decision problem.

\begin{framed}
\noindent
\emph{\textbf{$\starexp(k)$}}\\
\textbf{Input:} A temporal star $(S_n, \tau)$ where $\lvert \tau(e)\rvert \leq k$ for every edge $e$ in the star $S_n$.\\
\textbf{Question:} Is $(S_n, \tau)$ explorable?
\end{framed}

\section{Hardness of temporal edge exploration}\label{sec_edge_expl:hardness}
This section is devoted to showing that $\tempeuler(k)$ is $\np$-complete for all $k$ at least $3$ (Corollary \ref{corollary:temp_Euler_complexity}) and that $\starexp(k)$ is $\np$-complete for all $k$ at least $4$ (Corollary \ref{corollary:NPC-StarExp}). This last result resolves an open problem of Akrida, Mertzios and Spirakis which asked to determine the complexity of $\starexp(4)$ and $\starexp(5)$ \cite{starexp}. 

To show that $\starexp(4)$ is $\np$-hard, we will provide a reduction from the \col problem (see for instance Garey and Johnson \cite{GareyJohnson} for a proof of NP-completeness) which asks whether an input graph $G$ is $3$-colorable.

\begin{framed}
\noindent \emph{\textbf{\col}}\\
\textbf{Input:} A finite simple graph $G$.\\
\textbf{Question:} Does $G$ admit a proper $3$-coloring?
\end{framed}

Throughout, for an edge $e$ of a temporal star $(S_n, \tau)$, we call any pair of times $(t_1,t_2) \in \tau(e)^2$ with $t_1 < t_2$ a \emph{visit of $e$}. We say that $e$ is \emph{visited at $(t_1, t_2)$} in a temporal walk if the walk proceeds from the center of the star along $e$ at time $t_1$ and then back to the center at time $t_2$. We say that two visits $(x_1,x_2)$ and $(y_1,y_2)$ of two edges $e_x$ and $e_y$ are \emph{in conflict} with one another (or that `there is a conflict between them') if there exists some time $t$ with $x_1 \leq t \leq x_2$ and $y_1 \leq t \leq y_2$. Note that a complete set of visits (one visit for each edge of the star) which has no pairwise conflicts is in fact an exploration. 

\begin{theorem}\label{thm:star_exp_hard}
$\starexp(4)$ is $\np$-hard.
\end{theorem}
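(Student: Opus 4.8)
The plan is to reduce from \col. Given an instance $G$ of \col with $V(G) = \{v_1, \dots, v_n\}$ and $E(G) = \{f_1, \dots, f_m\}$, I will build a temporal star $(S_N, \tau)$ (for a suitable $N$) together with a carefully chosen time-assignment $\tau$ so that $(S_N, \tau)$ is explorable if and only if $G$ is $3$-colorable. The key idea is to use the three colors as three time-windows: each vertex $v_i$ of $G$ will be represented by an edge ("vertex-gadget edge") of the star that can be visited in one of exactly three ways, corresponding to the three colors. The visit chosen for the edge representing $v_i$ must happen inside the block of times reserved for color $c(v_i)$. Since no two visits may be in conflict (by the remark preceding the theorem statement), I will arrange the time blocks so that the only obstruction to combining the chosen visits into a full exploration comes from adjacency: for each edge $f_j = v_iv_{i'}$ of $G$ I will include an "edge-gadget edge" whose visits are designed so that it can be fitted into the global schedule precisely when $v_i$ and $v_{i'}$ have been assigned visits in different color-blocks. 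The budget of $4$ appearances per star-edge is what makes this delicate: a vertex-gadget edge gets up to $4$ times so as to encode a choice among $3$ intervals, and edge-gadget edges likewise get at most $4$ times so that the "different colors" constraint can be enforced by conflicts.

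Concretely, I would partition a long time-line into $n$ consecutive "vertex zones" $Z_1, \dots, Z_n$, one per vertex of $G$, and further subdivide each $Z_i$ into three slots $Z_i^{(1)}, Z_i^{(2)}, Z_i^{(3)}$, one per color. The vertex-gadget edge $a_i$ for $v_i$ is given times chosen so that its only legal visits are "traverse out at the start of $Z_i^{(c)}$, return at the end of $Z_i^{(c)}$" for $c \in \{1,2,3\}$ — forcing the explorer to commit to a color for $v_i$ and to "spend" the whole of zone $Z_i$ in that slot. Because the explorer must visit every leaf and the zones are disjoint and ordered, exactly one visit per vertex-edge is used and the visits are automatically conflict-free across distinct vertices. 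For each graph-edge $f_j = v_iv_{i'}$ with $i < i'$ I introduce an edge-gadget leaf whose at-most-$4$ times straddle zones $Z_i$ and $Z_{i'}$ in such a way that a conflict-free visit for it exists if and only if the color-slots chosen in $Z_i$ and $Z_{i'}$ are different; the natural way to do this is to give it one time "just after the start of slot $Z_i^{(c)}$" for the out-move and one time "just before the end of slot $Z_{i'}^{(c')}$" for the return, engineered so that the sandwich of free time needed to route the edge-edge exists exactly when $c \ne c'$. I would then verify both directions: a proper $3$-coloring yields a schedule by having the explorer process the zones in order, choosing in each $Z_i$ the slot dictated by the coloring and slotting in the edge-edges as it goes; conversely any exploration induces, via the forced slot-choice on each $a_i$, a $3$-coloring of $V(G)$, and the existence of conflict-free visits for all edge-gadget leaves guarantees properness.

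The main obstacle — and where most of the real work lies — is designing the edge-gadget so that "a conflict-free visit exists" is equivalent to "$c \ne c'$" while using at most $4$ times on that edge and without the edge-gadget visits colliding with one another or forcing unintended constraints between non-adjacent vertices. I will need to reserve, inside each color-slot $Z_i^{(c)}$, a small amount of "spare" time that edge-gadgets incident to $v_i$ can use, and to stagger these spare windows across the (at most $n-1$) edges at $v_i$ so that several edge-gadgets can coexist; bounding the number of times per edge-gadget by $4$ throughout this staggering is the crux. A secondary point to check carefully is that the construction has size polynomial in $|G|$ (the number of leaves is $n + m$ and the lifetime is polynomial, so integer time-labels stay polynomially bounded) and that membership in \np\ is immediate since a claimed exploration is a polynomial-size certificate — giving \np-completeness once combined with this hardness result, as recorded in Corollary \ref{corollary:NPC-StarExp}.
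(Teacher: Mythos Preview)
Your high-level plan matches the paper's exactly: reduce from \col, encode each vertex $v_i$ by a star-edge $e_i$ with four appearances $t^i_0 < t^i_1 < t^i_2 < t^i_3$ so that the three possible entry times correspond to the three colours, and then add gadgets for each graph-edge to rule out monochromatic edges. The vertex side of your construction is correct and is precisely what the paper does.

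The genuine gap is in the edge gadget. You propose a \emph{single} star-edge per graph-edge $v_iv_{i'}$, with at most four appearances, that is visitable if and only if the chosen colour-slots for $v_i$ and $v_{i'}$ differ; you correctly flag this as ``the crux'' but do not supply the construction, and there is good reason to believe a single four-time edge cannot do this job. A visit $(t,t')$ of such a gadget that avoids conflict with both vertex-gadget visits must have $[t,t']$ disjoint from the two occupied slots. Any visit spanning both zones contains the end of $Z_i$ and the start of $Z_{i'}$ and so is blocked by at least one colour of each endpoint regardless; the only useful visits are therefore ``short'' ones lying entirely inside a single slot of one zone, and each such visit tests only \emph{one} colour of \emph{one} endpoint. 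Four times give at most two disjoint short visits (one per zone in the $2{+}2$ split), which lets you forbid one monochromatic colour pair but not all three simultaneously while still permitting all six bichromatic pairs.

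The paper's resolution is to use \emph{three} edge-gadgets $e^0_{ij},e^1_{ij},e^2_{ij}$ per graph-edge $x_ix_j$, one for each colour $\xi\in\{0,1,2\}$. The gadget $e^\xi_{ij}$ receives four times, a consecutive pair inside the $\xi$-slot of $x_i$'s zone and a consecutive pair inside the $\xi$-slot of $x_j$'s zone; it therefore has a conflict-free visit unless \emph{both} $x_i$ and $x_j$ are coloured $\xi$. Together the three gadgets enforce exactly the properness constraint. The staggering problem you anticipated is handled by offsetting the pair for $e^\xi_{ij}$ inside slot $\xi$ of $x_i$ by an amount depending on the index of the \emph{other} endpoint (concretely the times are $t^i_\xi+2j-1,\,t^i_\xi+2j$), so that distinct edge-gadgets sharing an endpoint occupy distinct sub-positions within the slot and never conflict with one another. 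With this three-gadget design the star has $n+3m$ leaves rather than your $n+m$, but the remainder of your outline---both directions of the correctness argument, the polynomial size bound, and $\np$-membership---goes through just as you describe.
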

\begin{proof}
Take any \col instance $G$ with vertices $\{x_1, \ldots, x_n\}$. We will construct a $\starexp(4)$ instance $(S_{p},\tau)$ (where $p = n + 3m$) from $G$. 

\textbf{Defining $\mathbf{S_p}$.} The star $S_p$ is defined as follows: for each vertex $x_i$ in $G$, we make one edge $e_i$ in $S_p$ while, for each edge $x_ix_j$ with $i < j$ in $G$, we make three edges $e_{ij}^{0}$, $e_{ij}^{1}$ and $e_{ij}^{2}$ in $S_p$. 

\textbf{Defining $\mathbf{\tau}$.} For $i \in [n]$ and any non-negative integer $\psi \in \{0,1,2,\dots\}$, let $t^i_{\xi}$ be the integer
\begin{equation}\label{eqn:time_def}
    t^i_{\psi} := 2in^2 + 2\psi(n+1)
\end{equation}
and take any edge $x_jx_k$ in $G$ with $j < k$. Using the times defined in Equation (\ref{eqn:time_def}) and taking $\xi \in \{0,1,2\}$, we then define $\tau(e_i)$ and $\tau(e_{jk}^\xi)$ as
\begin{align}
    \tau(e_i)        &:= \bigl\{t^i_0, t^i_1, t^i_2, t^i_3\bigr\} \label{eqn:vertex_times} \text{ and}\\
    \tau(e^\xi_{jk}) &:= \bigl\{t^j_\xi + 2k -1, 
                          \quad t^j_\xi + 2k, 
                          \quad t^k_\xi + 2j - 1, 
                          \quad t^k_\xi + 2j\bigr\}. \label{eqn:edge_times}
\end{align}
\noindent Note that the elements of these sets are written in increasing order (see Figure \ref{fig:Star_reduction}).

\begin{figure}
    \centering
    \begin{tikzpicture}[-latex , auto , node distance =1 cm and 1 cm , on grid , semithick , state/.style ={ circle, top color =gray, bottom color = white, draw, black, text=black, minimum width =0.2 cm}]
        \node (C)  {};
        \node[state] (X1) [above       = of C] {\tiny{$\mathbf{x_1}$}};
        \node[state] (X2) [below left  = of C] {\tiny{$\mathbf{x_2}$}};
        \node[state] (X3) [below right = of C] {\tiny{$\mathbf{x_3}$}};
        \node[state] (c)   [right = 7cm of C ] {\tiny{$\mathbf{c}$}};
        \node (dummy)      [above = of c     ] {};
        \node[state] (tl)  [left  = 0.5 of dummy ] {};
        \node[state] (tll) [left  = of tl    ] {};
        \node[state] (tr)  [right = 0.5 of dummy ] {};
        \node[state] (trr) [right = of tr    ] {};
        \node (Dummy)      [below = of c] {};
        \node[state] (l1)  [left  = 0.5 of Dummy ] {};
        \node[state] (l2)  [left  = of l1] {};
        \node[state] (l3)  [left  = of l2] {};
        \node[state] (l4)  [left  = of l3] {};
        \node[state] (r1)  [right = 0.5 of Dummy ] {};
        \node[state] (r2)  [right = of r1] {};
        \node[state] (r3)  [right = of r2] {};
        \node[state] (r4)  [right = of r3] {};
        \draw[-] (X1) -- (X2) -- (X3) -- (X1);
        \draw[-] (c) -- (tll);
        \draw[-] (c) -- (tl);
        \draw[-] (c) -- (tr);
        \draw[-] (c) -- (trr);
        \draw[-] (c) -- (r1);
        \draw[-] (c) -- (r2);
        \draw[-] (c) -- (r3);
        \draw[-] (c) -- (r4);
        \draw[-] (c) -- (l1);
        \draw[-] (c) -- (l2);
        \draw[-] (c) -- (l3);
        \draw[-] (c) -- (l4);
        \node [above = 0.4 of tll] {\tiny{$\mathbf{e_1}$}};
        \node [above = 0.4 of tl]  {\tiny{$\mathbf{e^0_{1,2}}$}};
        \node [above = 0.4 of tr]  {\tiny{$\mathbf{e^1_{1,2}}$}};
        \node [above = 0.4 of trr] {\tiny{$\mathbf{e^2_{1,2}}$}};
        \node [below = 0.4 of r1] {\tiny{$\mathbf{e_2}$}};
        \node [below = 0.4 of r2] {\tiny{$\mathbf{e^0_{2,3}}$}};
        \node [below = 0.4 of r3] {\tiny{$\mathbf{e^1_{2,3}}$}};
        \node [below = 0.4 of r4] {\tiny{$\mathbf{e^2_{2,3}}$}};
        \node [below = 0.4 of l1] {\tiny{$\mathbf{e_3}$}};
        \node [below = 0.4 of l2] {\tiny{$\mathbf{e^0_{1,3}}$}};
        \node [below = 0.4 of l3] {\tiny{$\mathbf{e^1_{1,3}}$}};
        \node [below = 0.4 of l4] {\tiny{$\mathbf{e^2_{1,3}}$}};
    \end{tikzpicture}\\[2em]
    \def\variableRightSpacing{1.7cm}
    \def\variableLittleRightSpacingOne{0.5cm}
    \def\variableLittleRightSpacingTwo{0.8cm}
    \begin{tikzpicture}[-latex , auto , node distance =1 cm and 0.5cm , on grid , line width = 0.4pt,  state/.style ={ circle, top color =gray, bottom color = white, draw, black, text=black, minimum width =0.1 cm}]
        \node (x11)                         {\tiny{$\mathbf{t^1_0}$}};
        \node (x12) [right = \variableRightSpacing of x11] {\tiny{$\mathbf{t^1_1}$}};
        \node (x13) [right = \variableRightSpacing of x12] {\tiny{$\mathbf{t^1_2}$}};
        \node (x14) [right = \variableRightSpacing of x13] {\tiny{$\mathbf{t^1_3}$}};
        \node (x21) [right = 0.8 cm of x14] {\tiny{$\mathbf{t^2_0}$}};
        \node (x22) [right = \variableRightSpacing of x21] {\tiny{$\mathbf{t^2_1}$}};
        \node (x23) [right = \variableRightSpacing of x22] {\tiny{$\mathbf{t^2_2}$}};
        \node (x24) [right = \variableRightSpacing of x23] {\tiny{$\mathbf{t^2_3}$}};
        \node (dr1) [below =0.6cm of x11] {};
        \node (r1) [right = \variableLittleRightSpacingOne of dr1] {\tiny{$\mathbf{r_1}$}};
        \node (r2) [right = \variableLittleRightSpacingTwo of r1]    {\tiny{$\mathbf{r_2}$}};
        \node (dr2) [below =0.6cm of x21] {};
        \node (r3) [right = \variableLittleRightSpacingOne of dr2] {\tiny{$\mathbf{r_3}$}};
        \node (r4) [right = \variableLittleRightSpacingTwo of r3]    {\tiny{$\mathbf{r_4}$}};
        \node (dg1) [below =1.2cm of x12] {};
        \node (g1) [right = \variableLittleRightSpacingOne of dg1] {\tiny{$\mathbf{g_1}$}};
        \node (g2) [right = \variableLittleRightSpacingTwo of g1]    {\tiny{$\mathbf{g_2}$}};
        \node (dg2) [below =1.2cm of x22] {};
        \node (g3) [right = \variableLittleRightSpacingOne of dg2] {\tiny{$\mathbf{g_3}$}};
        \node (g4) [right = \variableLittleRightSpacingTwo of g3]    {\tiny{$\mathbf{g_4}$}};
        \node (db1) [below =1.8cm of x13] {};
        \node (b1) [right = \variableLittleRightSpacingOne of db1] {\tiny{$\mathbf{b_1}$}};
        \node (b2) [right = \variableLittleRightSpacingTwo of b1]    {\tiny{$\mathbf{b_2}$}};
        \node (db2) [below =1.8cm of x23] {};
        \node (b3) [right = \variableLittleRightSpacingOne of db2] {\tiny{$\mathbf{b_3}$}};
        \node (b4) [right = \variableLittleRightSpacingTwo of b3]    {\tiny{$\mathbf{b_4}$}};
        \node (l1) [above left = 0.75cm of x11]   {$\tau (e_1)$:};
        \node (l2) [above left = 0.75cm of x21]   {$\tau (e_2)$:};
        \node (lr) [below      = 1.15cm of l1 ]   {$\tau (e^{0}_{12})$:};
        \node (lg) [below      = 0.6cm of lr ]    {$\tau (e^{1}_{12})$:};
        \node (lb) [below      = 0.6cm of lg ]    {$\tau (e^{2}_{12})$:};
        \draw[-, line width = 2pt, red] (x11) -- (x12);
        \draw[-] (x12) -- (x13) -- (x14);
        \draw[-] (x21) -- (x22);
        \draw[-,  line width = 2pt, red] (x22) -- (x23);
        \draw[-] (x23) -- (x24);
        \draw[-] (r1) -- (r2) -- (r3);
        \draw[-,  line width = 2pt, red] (r3) -- (r4);
        \draw[-,  line width = 2pt, red] (g1) -- (g2);
        \draw[-] (g2) -- (g3) -- (g4);
        \draw[-,  line width = 2pt, red] (b1) -- (b2);
        \draw[-] (b2) -- (b3) -- (b4);
    \end{tikzpicture}
    \caption{Top left: $K^3$; we assume the coloring $x_i \mapsto i - 1$. Top right: star constructed from $K^3$. Bottom: times (and corresponding intervals) associated with the edges $e_1$, $e_2$ and $e^0_{1,2}$, $e^1_{1,2}$, $e^2_{1,2}$ (time progresses left-to-right and intervals are not drawn to scale). We write $r_1, r_2, r_3, r_4$ as shorthand for the entries of $\tau(e^0_{1,2})$ (similarly, for $i \in [4]$, we write $g_i$ and $b_i$ with respect to $\tau(e^1_{1,2})$ and $\tau(e^2_{1,2})$). The red and thick intervals correspond to visits defined by the coloring of the $K^3$.}
    \label{fig:Star_reduction}
\end{figure}
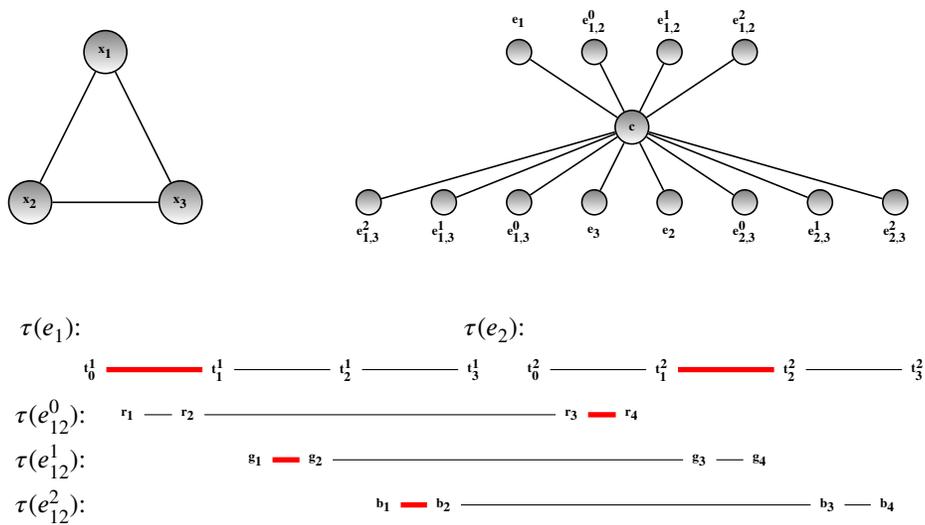
    
Intuitively, the times associated to each edge $e_i \in E(S_p)$ corresponding to a vertex $x_i \in V(G)$ (Equation~(\ref{eqn:vertex_times})) encode the possible colorings of $x_i$ via the three possible starting times of a visit of $e_i$. The three edges $e_{ij}^{0}$, $e_{ij}^{1}$ and $e_{ij}^{2}$ corresponding to some $x_ix_j \in E(G)$ are instead used to `force the colorings to be proper' in $G$. That is to say that, for a color $\xi \in \{0,1,2\}$, the times associated with the edge $e_{ij}^{\xi}$ (Equation~(\ref{eqn:edge_times})) will prohibit us from entering $e_i$ at its $\xi$-th appearance and also entering $e_j$ at its $\xi$-th appearance (i.e. `coloring $x_i$ and $x_j$ the same color').

Observe that the first two times in $\tau(e^\xi_{jk})$ lie within an interval given by consecutive times in $\tau(e_j)$ and that the same holds for the last two times in $\tau(e^\xi_{jk})$ with respect to $\tau(e_k)$ (see Figure \ref{fig:Star_reduction}). More precisely, it is immediate that for $1 \leq j < k \leq n$ and $\xi \in \{0,1,2\}$, we have:
\begin{equation}
    t^j_\xi  \;\:  
    <  \;\:  t^j_\xi + 2k - 1 
    <  \;\:  t^j_\xi + 2k   
    <  \;\:  t^j_{\xi + 1} \label{eqn:hardness_interval}
\end{equation}

\noindent Given this set-up, we will now show that $G$ is a yes instance if and only if $(S_p, \tau)$ is.

\textbf{Suppose $(S_p, \tau)$ is explorable.} Define the coloring (to be shown proper) $c: V(G) \to \{0,1,2\}$ taking each vertex $x_i$ to the color $\xi$ whenever $e_i$ is entered at time $t^i_\xi$ within the exploration of $(S_p, \tau)$ (note that these are the only possible times at which $e_i$ can be entered, since every edge appears at exactly $4$ times). We claim that $c$ is a proper coloring. To see this, suppose on the contrary that there is a monochromatic edge $x_ix_j$ with $i < j$ of color $\xi$ in $G$. Then, this means that $e_i$ was entered at time $t^i_\xi$ and exited at time at least $t^i_{\xi +1}$ and similarly $e_j$ was entered at time $t^j_\xi$ and exited at time at least $t^j_{\xi +1}$. But then, since all times in $\tau(e^\xi_{ij})$ are contained either in the open interval $(t^i_\xi, t^i_{\xi +1}) \subseteq \mathbb{R}$ or the open interval $(t^j_\xi, t^j_{\xi +1}) \subseteq \mathbb{R}$ we know that $e^\xi_{ij}$ cannot be explored (by (\ref{eqn:hardness_interval})). This contradicts the assumption that $(S_p, \tau)$ is explorable, hence $c$ must be a proper coloring. 

\textbf{Conversely, suppose $G$ admits a proper $3$-coloring $c: V(G) \to \{0,1,2\}$.} We define the following exploration of $(S_p, \tau)$ (see Figure \ref{fig:Star_reduction}): 
\begin{itemize}
    \item for every vertex $x_i$ in $G$, if $c(x_i) = \xi$, then visit $e_i$ at $(t^i_\xi, t^i_{\xi + 1})$
    \item for every edge $x_ix_j$ in $G$ with $i < j$ and every color $\xi \in \{0,1,2\}$, define the visit of $e^\xi_{ij}$ as follows: if $c(x_i) \neq \xi$, then visit $e^\xi_{ij}$ at $(t^i_{\xi} + 2j -1, t^i_{\xi} + 2j)$; otherwise visit $e^\xi_{ij}$ at $(t^j_\xi + 2i -1, t^j_\xi + 2i)$. 
\end{itemize}
Our aim now is to show that the visits we have just defined in terms of the coloring $c$ are disjoint (and thus witness the explorability of $(S_p, \tau)$). 

Take any $i < j$ and any $\xi \in \{0,1,2\}$. By our definition of $\tau(e_i)$ and $\tau(e_j)$, we must have $\max \tau(e_i) = t^i_{3} < 2jn^2 = \min \tau(e_j) \text{ whenever } i < j$. Thus we note that there are no conflicts between the visit of $e_i$ and the visit of $e_j$. 
    
Note that, for all $(\xi, \omega) \in \{0,1,2\}^2$ and all pairs of edges $x_ix_j$ and $x_kx_\ell$ in $G$ with $i < j$ and $k < \ell$, the visit $(v_{i,j}, v_{i,j} +1)$ of $e^\xi_{ij}$ is in conflict with the visit  $(v_{k,\ell}, v_{k,\ell} + 1)$ of $e^\omega_{k\ell}$ only if $x_ix_j = x_kx_\ell$.
To see this, observe that the visits of $e^\xi_{ij}$ and $e^\omega_{k\ell}$ both consist of two consecutive times where the first time is odd. Thus we would only have a conflict if $v_{i,j} = v_{k,\ell}$ which can be easily checked to happen only if $i=k$ and $j=\ell$.

Finally we claim that there are no conflicts between the visit of $e^\xi_{ij}$ and the visits of either $e_i$ or $e_j$. To show this, we will only argue for the lack of conflicts between the visits of $e_i$ and $e^\xi_{ij}$ since the same ideas suffice for the $e_j$-case as well. Suppose $c(x_i) = \xi$, then we visit $e^\xi_{ij}$ at $(t^j_\xi + 2i -1, t^j_\xi + 2i)$ and then $t^j_\xi + 2i -1 > t^j_\xi > t^i_3 = \max \tau(e_i)$ since $i < j$ and since $c$ is a proper coloring. Similarly, if $c(x_i) \neq \xi$, then we visit $e^\xi_{ij}$ at $(t^i_{\xi} + 2j -1, t^i_{\xi} + 2j)$. As we observed in Inequality (\ref{eqn:hardness_interval}), we have $t^i_\xi < t^i_{\xi} + 2j -1 < t^i_{\xi} + 2j < t^i_{\xi +1}$. Thus, if $(u_1, u_2)$ is the visit of $e_i$, then either $u_2 < t^i_\xi$ or $t^i_{\xi + 1} < u_1$. In other words, no conflicts arise.

This concludes the proof since we have shown that the visits we assigned to the edges of $S_p$ constitute an exploration of $(S_p, \tau)$.
\end{proof}

Observe that increasing the maximum number of times per edge cannot make the problem easier: we can easily extend the hardness result to any $k' > 4$ by simply adding a new edge with $k'$ times all prior to the times that are already in the star. This, together with the fact that Akrida, Mertzios and Spirakis \cite{starexp} showed that $\starexp(k)$ is in $\np$ for all $k \geq 0$, allows us to conclude the following corollary. 

\begin{corollary}\label{corollary:NPC-StarExp}
For all $k$ at least $4$, $\starexp(k)$ is $\np$-complete.
\end{corollary}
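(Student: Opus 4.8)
The plan is to assemble the corollary from three pieces: membership in $\np$, the base case $k=4$, and a padding argument lifting hardness from $k=4$ to every larger $k$. Membership is already available: Akrida, Mertzios and Spirakis~\cite{starexp} showed $\starexp(k)\in\np$ for all $k\ge 0$. (If a self-contained argument is wanted, a polynomial-size certificate is a complete set of visits, one for each edge of the star, and by the observation just before Theorem~\ref{thm:star_exp_hard} such a set with no pairwise conflicts is precisely an exploration; checking pairwise non-conflict is clearly polynomial.) So the only work left is to show $\np$-hardness of $\starexp(k)$ for each fixed $k\ge 4$.

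For $k=4$ this is exactly Theorem~\ref{thm:star_exp_hard}. For $k\ge 5$ I would give a trivial polynomial-time reduction from $\starexp(4)$, realizing the idea sketched in the remark preceding the statement. Given a $\starexp(4)$ instance $(S_n,\tau)$ with lifetime $\Lambda$ (and $n\ge 1$ without loss of generality), form $(S_{n+1},\tau')$ by attaching one new leaf to the centre via an edge $e^{\ast}$, setting $\tau'(e^{\ast}):=\{\Lambda+1,\Lambda+2,\dots,\Lambda+k\}$ and leaving $\tau'$ equal to $\tau$ on every old edge. This is a legal $\starexp(k)$ instance: old edges still carry at most $4\le k$ times, $\lvert\tau'(e^{\ast})\rvert=k$, and every time of $e^{\ast}$ strictly exceeds every time of every old edge.

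What remains — and the only part that needs any care — is the equivalence of the two instances. In the easy direction, given an exploration of $(S_n,\tau)$ one appends the visit of $e^{\ast}$ at $(\Lambda+1,\Lambda+2)$; since $\Lambda+1$ strictly exceeds every time already used, the result is still a strict closed temporal walk from the centre, and it now also visits the new leaf, hence is an exploration of $(S_{n+1},\tau')$. In the other direction, fix an exploration $W$ of $(S_{n+1},\tau')$; since $W$ is a closed walk from the centre in a star it decomposes into visits, its times strictly increase, every visit of $e^{\ast}$ uses two times exceeding $\Lambda$, and every visit of an old edge uses two times that are at most $\Lambda$. By strict monotonicity, all old-edge visits of $W$ precede all $e^{\ast}$-visits, so deleting the $e^{\ast}$-visits leaves an initial segment of $W$ that is still a closed temporal walk from the centre visiting every leaf of $S_n$, i.e.\ an exploration of $(S_n,\tau)$. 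Combining the $k=4$ case, this reduction for $k\ge 5$, and the $\np$ upper bound yields the corollary. I expect the only (mild) obstacle to be this clean ``split'' of $W$, which rests purely on the strict increase of times together with the separation $\min\tau'(e^{\ast})=\Lambda+1>\Lambda$.
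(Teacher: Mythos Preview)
Your proposal is correct and follows essentially the same approach as the paper: membership in $\np$ is quoted from~\cite{starexp}, hardness for $k=4$ is Theorem~\ref{thm:star_exp_hard}, and hardness for $k>4$ is obtained by a one-edge padding reduction. The only cosmetic difference is that the paper places the $k$ times of the new edge strictly \emph{before} all existing times rather than strictly after; both choices work for the same reason, and your more detailed justification of the equivalence is fine.
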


Next we shall reduce $\starexp(k)$ to $\tempeuler(k-1)$. We point out that, for our purposes within this section, only the first point of the statement of the following result is needed. However, later (in the proof of Corollary \ref{corollary:starExp_algorithm}) we shall make use of the properties stated in the second point of Lemma \ref{lemma:Euler_reduction} (this is also why we allow any $k$ times per edge rather than just considering the case $k = 4$). Thus we include full details here.

\begin{lemma}\label{lemma:Euler_reduction}
For all $k \geq 2$ there is a polynomial-time-computable mapping taking every $\starexp(k)$ instance $(S_n, \tau)$ to a $\tempeuler(k-1)$ instance $(D_n, \sigma)$ such that
\begin{enumerate}
    \item $(S_n, \tau)$ is a yes instance for $\starexp(k)$ if and only if $(D_n, \sigma)$ is a yes instance for $\tempeuler(k-1)$ and
    \item $D_n$ is a graph obtained by identifying $n$-copies $\{K^3_1, \ldots, K^3_n\}$ of a cycle on three vertices along one center vertex (see Figure~\ref{fig:Euler_reduction}) and such that 
    \begin{align*}
        \max_{t \in \mathbb{N}} \lvert \{&e \in E(D_n): \min(\sigma(e)) \le t \le \max(\sigma(e))\}\rvert\\
        &\leq 3 \max_{t \in \mathbb{N}} \lvert \{e \in E(S_n): \min(\tau(e)) \le t \le \max(\tau(e))\}\rvert. 
    \end{align*}
\end{enumerate}
\end{lemma}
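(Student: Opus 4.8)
The plan is a gadget reduction that replaces each star edge by a triangle through the centre, exploiting the fact that inside any Euler circuit a triangle must be traversed as one uninterrupted block. Write $c$ for the centre of $S_n$, let $e_1,\dots,e_n$ be its edges, and write $\tau(e_i)=\{s^i_1<\dots<s^i_{k_i}\}$ with $k_i\le k$. To build $D_n$, attach to $c$, for each $i\in[n]$, a triangle $K^3_i$ with new vertices $a_i,b_i$ and edges $f^1_i=ca_i$, $f^2_i=a_ib_i$, $f^3_i=b_ic$ (see Figure~\ref{fig:Euler_reduction}). Define $\sigma(f^1_i):=\{2s^i_j : 1\le j\le k_i-1\}$, $\sigma(f^3_i):=\{2s^i_j : 2\le j\le k_i\}$, and $\sigma(f^2_i):=\{2s^i_j+1 : 1\le j\le k_i-1\}$. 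Each of these sets has at most $k-1$ elements, so $(D_n,\sigma)$ is a valid $\tempeuler(k-1)$ instance of the shape prescribed in point~2, and it is computable in linear time. The intended correspondence is: a visit of $e_i$ at $(s^i_a,s^i_b)$ with $a<b$ matches the traversal $c\to a_i\to b_i\to c$ of $K^3_i$ using times $2s^i_a<2s^i_a+1<2s^i_b$; crucially, the two centre-incident edges of $K^3_i$ carry only even times and the middle edge only odd times.

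The first structural fact I would record is that $a_i$ and $b_i$ have degree $2$, so in \emph{any} (static) Euler circuit of $D_n$ the three edges of each $K^3_i$ are traversed consecutively, as a block $c\to a_i\to b_i\to c$ or its reverse $c\to b_i\to a_i\to c$; thus an Euler circuit of $D_n$ is exactly a cyclic concatenation of $n$ such triangle blocks. Consequently a temporal Euler circuit of $(D_n,\sigma)$ is a sequence of $n$ triangle blocks; since temporal walks are strict, the $3n$ time-edges carry strictly increasing times, so the blocks occupy pairwise disjoint, time-ordered intervals, and within each block the first and last of its three times lie on the two centre-incident edges (hence are even) while the middle time lies on $f^2_i$ (hence is odd).

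For correctness I would argue both directions using this picture. For the ``if'' direction, given an exploration of $(S_n,\tau)$ whose $j$-th visit is $(t^j_1,t^j_2)$ of $e_{\pi(j)}$, concatenate in the same order the blocks $(f^1_{\pi(j)},2t^j_1),(f^2_{\pi(j)},2t^j_1+1),(f^3_{\pi(j)},2t^j_2)$; membership in the $\sigma$-sets holds because $t^j_1\neq\max\tau(e_{\pi(j)})$ and $t^j_2\neq\min\tau(e_{\pi(j)})$, strictness inside a block follows from the times being integers with $t^j_1<t^j_2$, and strictness between consecutive blocks follows from $t^j_2<t^{j+1}_1$. For the ``only if'' direction, given a temporal Euler circuit, let $u_i$ and $w_i$ be the times of the first and last time-edge of the block for $K^3_i$; both are even, say $u_i=2s^i_a$ and $w_i=2s^i_c$, and $u_i<w_i$ forces $a<c$, so $(s^i_a,s^i_c)$ is a genuine visit of $e_i$ whose doubled interval equals $[u_i,w_i]$. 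Because the triangle blocks are time-ordered, these extracted visits inherit the ordering and concatenate into an exploration of $(S_n,\tau)$; the even/odd split makes this extraction work identically for both block orientations, so no case analysis on orientation is needed.

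For point~2 I would use the retraction $t\mapsto\lfloor t/2\rfloor$. For every $i$ and $\ell$ we have $\min\sigma(f^\ell_i)\ge 2\min\tau(e_i)$ and $\max\sigma(f^\ell_i)\le 2\max\tau(e_i)$, so whenever $\min\sigma(f^\ell_i)\le t\le\max\sigma(f^\ell_i)$ we get $\min\tau(e_i)\le\lfloor t/2\rfloor\le\max\tau(e_i)$. Hence, for each fixed integer $t$, the set of indices $i$ contributing an edge of $D_n$ that is ``active across $t$'' injects into the set of star edges active across $\lfloor t/2\rfloor$; multiplying by the at most $3$ edges each such $i$ contributes yields the claimed inequality. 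I expect the only step needing real care to be the gadget design itself — specifically, ruling out that a temporal Euler circuit of $(D_n,\sigma)$ interleaves triangles or exploits the reverse orientation in an unintended way — and that is precisely what the degree-$2$ observation together with the even/odd separation of time-sets secures; the remaining verifications are routine bookkeeping.
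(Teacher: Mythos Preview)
Your proof is correct and follows essentially the same approach as the paper: the paper builds the identical triangle gadget with time-sets $\{t_1,\dots,t_{k-1}\}$, $\{t_1+1,\dots,t_{k-1}+1\}$, $\{t_2,\dots,t_k\}$ after first assuming without loss of generality that all times are even, whereas you achieve the same effect by doubling times explicitly. Your treatment is arguably slightly cleaner in two places --- you address the reverse orientation of a triangle block explicitly via the even/odd split (the paper absorbs this into an ``up to relabelling'' remark), and you spell out the $t\mapsto\lfloor t/2\rfloor$ argument for point~2 --- but the constructions and the correctness arguments are the same.
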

\begin{proof}
Note that we can assume without loss of generality that: (1) every edge in $S_n$ has exactly $k$-times on each edge and (2) that all times are multiples of $2$. This follows from the fact that we can construct from any $\starexp(k)$-instance $(S_{n'}, u)$ another $\starexp(k)$-instance $(S_n, \tau)$ so that : $(S_n, \tau)$ is explorable if and only if $(S_{n'}, u)$ also is and every time in $(S_n, \tau)$ satisfies conditions (1) and (2).
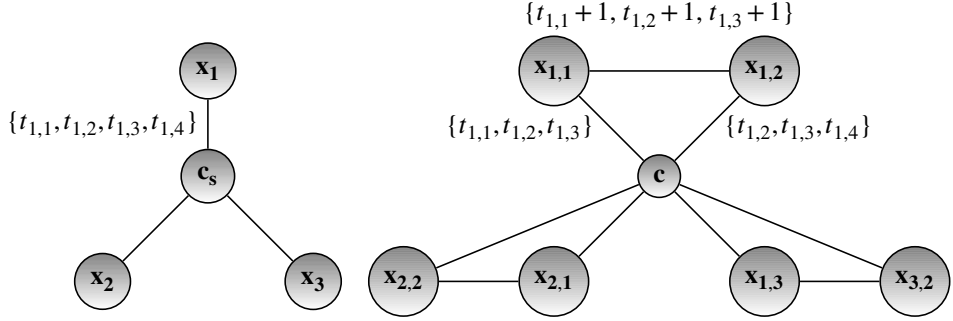
\begin{figure}
    \centering
    \begin{tikzpicture}[-latex , auto , node distance =1.4 cm and 1.4 cm , on grid , semithick , state/.style ={ circle, top color =gray, bottom color = white, draw, black, text=black, minimum width =0.2 cm}]
        \node[state] (C)  {$\mathbf{c_s}$};
        \node[state] (X1) [above =of C]{$\mathbf{x_1}$};
        \node[state] (X2) [below left =of C] {$\mathbf{x_2}$};
        \node[state] (X3) [below right =of C] {$\mathbf{x_3}$};
        \draw[-] (C) -- (X1) node [midway] {$\{t_{1,1}, t_{1,2}, t_{1,3}, t_{1,4}\}$};
        \draw[-] (C) -- (X2);
        \draw[-] (C) -- (X3);
        \node[state] (CC) [right =6cm of C] {$\mathbf{c}$};
        \node[state] (Y11) [above left =of CC]{$\mathbf{x_{1,1}}$};
        \node[state] (Y12) [above right =of CC]{$\mathbf{x_{1,2}}$};
        \node[state] (Y21) [below left =of CC] {$\mathbf{x_{2,1}}$};
        \node[state] (Y22) [left =2cm of Y21]{$\mathbf{x_{2,2}}$};
        \node[state] (Y31) [below right =of CC] {$\mathbf{x_{1,3}}$};
        \node[state] (Y32) [right =2cm of Y31]{$\mathbf{x_{3,2}}$};
        \draw[-] (CC) -- (Y11) node [midway, left = 3pt   ] {$\{t_{1,1}, t_{1,2}, t_{1,3}\}$};
        \draw[-] (CC) -- (Y12) node [midway, right = 3pt  ] {$\{t_{1,2}, t_{1,3}, t_{1,4}\}$};
        \draw[-] (Y11) -- (Y12) node [midway, above = 12pt] {$\{t_{1,1} + 1, \: t_{1,2} + 1, \: t_{1,3} + 1\}$};
        \draw[-] (CC) -- (Y21);
        \draw[-] (CC) -- (Y22);
        \draw[-] (Y21) -- (Y22);
        \draw[-] (CC) -- (Y31);
        \draw[-] (CC) -- (Y32);
        \draw[-] (Y31) -- (Y32);
        
    \end{tikzpicture}
    \caption{Building $(D_3, \sigma)$ from $(S_3, \tau)$. The times along edges are drawn only for the edge $c_sx_1$ in $S_3$ and for its corresponding $3$-cycle $cx_{1,1}x_{1,2}$ in $D_3$. Since $t_{1,1}$, $t_{1,2}$, $t_{1,3}$ and $t_{1,4}$ are all multiples of $2$, we know that $t_{1,j} < t_{1,j +1} < t_{1,j+1}$ for all $j \in [3]$. Thus the reduction associates the visit $(t_s, t_e)$ of $c_sx_1$ in the star to exploration $(t_s, t_s + 1, t_e)$ of the $3$-cycle corresponding to $c_sx_1$ in $D_3$.}
    \label{fig:Euler_reduction}
\end{figure}

Now we will show how to construct a {$\tempeuler(k-1)$-instance} $(D_n, \sigma)$ from $(S_n, \tau)$ such that $(D_n, \sigma)$ is temporally Eulerian if and only if $(S_n, \tau)$ is explorable (see Figure \ref{fig:Euler_reduction}). Throughout, denote the vertices of the $i$-th $3$-cycle $C^3_i$ of $D_n$ by $\{c,  x_{i,1}, x_{i,2}\}$ and let its edges be $f_{i,1} = cx_{i,1}$, $f_{i,2} = x_{i,1}x_{i,2}$ and $f_{i,3} = x_{i,2}c$. For every $i \in [n]$ with $\tau(e_i) = \{t_1, \ldots, t_k\}$ where $t_1 < t_2 < \dots < t_k$, define the map $\sigma : E(D_n) \to 2^{\mathbb{N}}$ as: 
    \begin{align*}
        \sigma(f_{i,1}) &:= \{t_1, \ldots, t_{k-1}\}, \\
        \sigma(f_{i,2}) &:= \{t_1 + 1, \ldots,  t_{k-1} + 1\}, \\
        \sigma(f_{i,3}) &:= \{t_2, \ldots, t_{k}\}. 
    \end{align*}
Note that $\lvert \sigma(f_{i,1})\rvert = \lvert \sigma(f_{i,2})\rvert = \lvert \sigma(f_{i,3})\rvert = k-1$. Now suppose $(S_n, \tau)$ is a yes-instance witnessed by the sequence $\mathcal{V}$ of visits $\mathcal{V} := (x_1,y_1), \ldots, (x_n, y_n)$ of the edges $e_1, \ldots, e_n$ of $S_n$ and observe that $y_i < x_{i+1}$ for all $i \in [n-1]$. We claim that the sequence of time-edges \[(f_{1,1}, x_1), (f_{1,2}, x_1 + 1), (f_{1,3}, y_1), \ldots, (f_{n,1}, x_n), (f_{n,2}, x_n + 1), (f_{n,3}, y_n)\] is a temporal Eulerian circuit in $G$. To see this, recall that $y_j < x_{j+1}$ (for $j \in [n-1]$) and note that: 
\begin{enumerate}
    \item by definition $f_{1,1}, f_{1,2}, f_{1,3}, \ldots, f_{n,1}, f_{n,2}, f_{n,3}$ is an Eulerian circuit in the underlying static graph $D_n$ (i.e. we walk along each $3$-cycle in turn) and 
    \item $x_i < x_i + 1 < y_i$ for all $i \in [n]$ since we assumed that $x_i, y_i \in 2\mathbb{N}$.
\end{enumerate}
Conversely, suppose $(D_n, \sigma)$ is a yes-instance and let this fact be witnessed by the temporal Eulerian circuit $\mathcal{K}$. Recall that a temporal Eulerian circuit induces an Eulerian circuit in the underlying static graph. Thus, since every Eulerian circuit in $D_n$ must run through each $3$-cycle, we know that $\mathcal{K}$ must consist -- up to relabelling of the edges -- of a sequence of time-edges of the form 
\begin{align*}
    \mathcal{K} := (f_{1,1}, x_{1,1}), (f_{1,2}, x_{1,2}), (f_{1,3}, y_{1,3})&,\;\; (f_{2,1}, x_{2,1}), (f_{2,2}, x_{2,2}), (f_{2,3}, y_{2,3}), \\
    \ldots&,\;\; (f_{n,1}, x_{n,1}), (f_{n,2}, x_{n,2}), (f_{n,3}, y_{n,3}).
\end{align*}
It follows immediately from the definition of $(D_n, \sigma)$ that visiting each edge $e_j$ in $S_n$ at $(x_{j,1}, y_{j,3})$ constitutes an exploration of $(S_n, \tau)$, as desired.
\end{proof} 

Since $\tempeuler(k)$ is clearly in $\np$ (where the circuit acts as a certificate), our desired $\np$-completeness result follows immediately from Lemma~\ref{lemma:Euler_reduction} and Corollary~\ref{corollary:NPC-StarExp}. 
\begin{corollary}\label{corollary:temp_Euler_complexity}
$\tempeuler(k)$ is $\np$-complete for all $k$ at least $3$.
\end{corollary}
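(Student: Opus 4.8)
The plan is to obtain the corollary by composing results already established in this section, together with the trivial membership argument. First I would record that $\tempeuler(k) \in \np$ for every $k$: a temporal Eulerian circuit is a certificate of polynomial size, and one checks in polynomial time that the given sequence of time-edges is a strict closed temporal walk and that the underlying sequence of edges is an Euler circuit of $G$ (each edge of $G$ used exactly once, consecutive edges incident, first vertex equal to last vertex). This handles the upper bound uniformly.

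For hardness, fix any $k \geq 3$. Then $k+1 \geq 4$, so Corollary~\ref{corollary:NPC-StarExp} gives that $\starexp(k+1)$ is $\np$-hard. Since $k+1 \geq 2$, Lemma~\ref{lemma:Euler_reduction} applies with its parameter instantiated to $k+1$, producing a polynomial-time-computable map sending each $\starexp(k+1)$ instance $(S_n,\tau)$ to a $\tempeuler((k+1)-1) = \tempeuler(k)$ instance $(D_n,\sigma)$ that is a yes-instance if and only if $(S_n,\tau)$ is. Composing this with the reduction of Theorem~\ref{thm:star_exp_hard} (or simply invoking $\np$-hardness of $\starexp(k+1)$ directly) yields a polynomial-time many-one reduction witnessing that $\tempeuler(k)$ is $\np$-hard. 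Combined with the $\np$ membership above, this establishes $\np$-completeness of $\tempeuler(k)$ for all $k \geq 3$, which is exactly the statement of Corollary~\ref{corollary:temp_Euler_complexity}.

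I do not expect any substantive obstacle: the hard work is contained in Theorem~\ref{thm:star_exp_hard} and Lemma~\ref{lemma:Euler_reduction}, and the corollary is a routine composition. The only points needing care are the index bookkeeping — namely that $k \geq 3$ forces $k+1$ into both the range $\geq 4$ demanded by Corollary~\ref{corollary:NPC-StarExp} and the range $\geq 2$ demanded by Lemma~\ref{lemma:Euler_reduction} — and the observation that the instances output by Lemma~\ref{lemma:Euler_reduction} genuinely satisfy the promise of $\tempeuler(k)$, since each edge of $D_n$ receives exactly $k-1 \leq k$ times under $\sigma$. Both checks are immediate.
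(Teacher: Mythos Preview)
Your proposal is correct and follows exactly the route taken in the paper: membership in $\np$ via the circuit as certificate, and hardness by composing Corollary~\ref{corollary:NPC-StarExp} (for $\starexp(k+1)$ with $k+1\ge 4$) with Lemma~\ref{lemma:Euler_reduction} (instantiated at $k+1$) to obtain a reduction to $\tempeuler(k)$. One tiny numerical slip in your closing commentary: with the lemma's parameter set to $k+1$, each edge of $D_n$ receives exactly $k$ times (not $k-1$), but this does not affect the argument since $k\le k$ still places the output in $\tempeuler(k)$.
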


As we noted earlier, $\tempeuler(1)$ is trivially solvable in time linear in the number of edges of the underlying static graph. Although our proof leaves open the $k=2$ case, Marino and Silva closed this gap showing that $\tempeuler(k)$ is $\np$-complete for all $k \geq 2$ (thus resolving an open problem from a previous version of this paper).

Observe that the reduction in Lemma \ref{lemma:Euler_reduction} rules out $\fpt$ algorithms with respect to many standard parameters describing the structure of the underlying graph (for instance the path-width is $2$ and feedback vertex number\footnote{Recall that a \emph{feedback vertex set} in a  graph (resp. directed graph) $G$ is a vertex subset $S \subseteq V(G)$ such that $G - S$ is a forest (resp. directed acyclic graph). The \emph{feedback vertex number} of a graph $G$ is the minimum number of vertices needed to form a feedback vertex set in $G$.} is $1$). In fact we can strengthen these intractability results even further by showing that $\tempeuler(k)$ is hard even for instances whose underlying static graph has vertex-cover number\footnote{Recall that a \emph{vertex-cover} in a graph $G$ is a vertex-subset $S \subseteq V(G)$ such that every edge in $G$ is incident with at least one vertex in $S$. The \emph{vertex-cover number} of a graph $G$ is the minimum number of vertices needed to form a vertex-cover of $G$.} $2$. This motivates our search in Section \ref{sec_edge_expl:fpt_algs} for parameters that describe the structure of the times assigned to edges rather than just the underlying static structure. 

Notice that this time we will reduce from $\starexp(k)$ to $\tempeuler(k)$ (rather than from $\starexp(k+1)$ as in Lemma \ref{lemma:Euler_reduction}), so, in contrast to our previous reduction (Lemma \ref{lemma:Euler_reduction}), the proof of the following result cannot be used to show hardness of $\tempeuler(3)$. 
\begin{theorem}\label{thm:reduction_VC}
For all $k \geq 4$, the $\tempeuler(k)$ problem is $\np$-complete even on temporal graphs whose underlying static graph has vertex-cover number $2$.
\end{theorem}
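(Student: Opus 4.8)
The plan is to reduce from $\starexp(k)$, which is $\np$-hard for every $k \geq 4$ by Corollary~\ref{corollary:NPC-StarExp}, to $\tempeuler(k)$ restricted to instances whose underlying static graph is a complete bipartite graph $K_{2,m}$. Since $K_{2,m}$ has vertex-cover number $2$ for every $m \geq 2$ (the two vertices in the small side form a cover, and no single vertex covers all edges once $m \geq 2$), and since $\tempeuler(k)$ lies in $\np$ with a temporal Euler circuit as certificate, such a reduction immediately yields the claimed $\np$-completeness.

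Given a $\starexp(k)$ instance $(S_n,\tau)$ with centre $c$, leaves $\ell_1,\dots,\ell_n$ and leaf-edges $e_i = c\ell_i$, we may assume $n \geq 2$ (smaller instances are trivial) and, after translating all times by a common constant, that every time in $(S_n,\tau)$ is at least $3$; translation preserves explorability. Build a graph $D$ on two ``hubs'' $u,w$ together with the leaves $\ell_1,\dots,\ell_n$, with both edges $u\ell_i$ and $w\ell_i$ present for each $i$; if $n$ is odd, add one further ``dummy'' leaf $\ell_0$ with edges $u\ell_0, w\ell_0$. Set $\sigma(u\ell_i) = \sigma(w\ell_i) := \tau(e_i)$ for $1 \leq i \leq n$, and $\sigma(u\ell_0) := \{1\}$, $\sigma(w\ell_0) := \{2\}$. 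Then $D$ is a copy of $K_{2,m}$ with $m = n$ or $m = n+1$ even, so it is connected, every leaf has degree $2$, both hubs have even degree, and $D$ therefore admits Euler circuits; it has vertex-cover number $2$; every edge carries at most $k$ times; and the construction is clearly polynomial.

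For correctness, observe that since each leaf of $D$ has degree $2$, any Euler circuit passes through each leaf exactly once, traversing from $u$ to $w$ (or $w$ to $u$) along its two incident edges. Hence a temporal Euler circuit of $(D,\sigma)$ passes through $\ell_i$ at a pair of times $(s,t)$ with $s < t$ and $s,t \in \tau(e_i)$, i.e.\ at a visit of $e_i$. The dummy leaf $\ell_0$, having both its times below every other time, can only be traversed $u \to \ell_0 \to w$ at times $(1,2)$, and the time-edge $(u\ell_0,1)$ must come first in the circuit since nothing can precede time $1$; deleting this initial segment leaves a temporal walk that reads off, in order, a sequence of visits $(s_1,t_1),\dots,(s_n,t_n)$ of a permutation of $e_1,\dots,e_n$ with $s_1 < t_1 < \dots < s_n < t_n$, which is precisely an exploration of $(S_n,\tau)$. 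Conversely, an exploration of $(S_n,\tau)$ visits the leaves in some order with strictly increasing visit-times; one builds an Euler circuit of $(D,\sigma)$ by starting $u \to \ell_0 \to w$ (when $n$ is odd) and then routing through the leaves in that order, entering the $j$-th leaf from $u$ or from $w$ according to the parity of $j$ and leaving by its other edge, using the visit-times. The parity bookkeeping together with the inclusion of $\ell_0$ guarantees the walk closes up at $u$, and its times are strictly increasing, so it is a temporal Euler circuit.

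I expect the only delicate point to be this parity bookkeeping: one must check that along the Euler circuit each leaf $\ell_i$ is visited with one incident edge serving as ``entry'' and the other as ``exit'', irrespective of the parity of its position, so that the two required times are always drawn from $\tau(e_i) = \sigma(u\ell_i) = \sigma(w\ell_i)$ — which is symmetric in the two hubs, so no difficulty actually arises — and that the dummy leaf pins the circuit's start at $u$ so the parity argument closes the circuit correctly. Everything else (connectivity and degree-parity of $K_{2,m}$, its vertex-cover number, the $\leq k$ bound on times, polynomiality, and $\np$-membership) is routine.
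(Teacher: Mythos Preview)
Your proposal is correct and follows essentially the same approach as the paper: both reduce from $\starexp(k)$ by replacing the centre of the star with two non-adjacent ``hub'' vertices $c_1,c_2$ (your $u,w$), joining every leaf to both hubs, and giving each of the two new edges at a leaf the same time-set as the original star edge; the resulting graph is $K_{2,m}$ with vertex cover $\{c_1,c_2\}$. The only cosmetic difference is the parity fix: the paper pads an odd star with a dummy leaf whose times lie \emph{after} the lifetime (so it is visited last), whereas you place the dummy's two edges at times $1$ and $2$ \emph{before} everything else (so it is visited first and pins the starting hub). Both work for the same reason.
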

\begin{proof}
Take any $\starexp(k)$ instance $(S_n,\tau)$ and assume that $n$ is even (if not, then simply add a dummy edge with all appearances strictly after the lifetime of the graph). Denoting by $c$ the center of $S_n$ and by $x_1, \ldots, x_n$ its leaves, let $S^{c_1,c_2}_n$ be the double star constructed from $S_n$ by splitting $c$ into two twin centers; stating this formally, we define $S^{c_1,c_2}_n$ as
\[S^{c_1,c_2}_n = (\{c_1,c_2,x_1,\ldots, x_n\}, \{c_ix_j: i \in [2] \text{ and } j \in [n]\}).\]
Notice that, since $n$ is even, $S^{c_1,c_2}_n$ is Eulerian and notice that the set $\{c_1,c_2\}$ is a vertex cover of $S^{c_1,c_2}_n$. 

Defining $\sigma: E(S^{c_1,c_2}_n) \to 2^{\mathbb{N}}$ for all $i \in [2]$ and $j \in [n]$ as $\sigma(c_ix_j) = \tau(cx_j)$, we claim that the temporal graph $(S^{c_1,c_2}_n, \sigma)$ is temporally Eulerian if and only if $(S_n, \tau)$ is explorable. 

Suppose that $(S_n, \tau)$ is explorable and let this be witnessed by the sequence of visits $(s_1,t_1), \ldots, (s_n, t_n)$. Then it follows immediately by the definition of $\sigma$ that the following sequence of time-edges is a temporal circuit in $(S^{c_1,c_2}_n, \sigma)$:
\[
    (c_1x_1, s_1),(x_1c_2, t_1),(c_2x_2, s_2)(x_2c_1, t_2), (c_1x_3, s_3),(x_3c_2, t_3), \ldots,  (c_2x_n, s_n)(x_nc_1, t_n).
\]
To see this, note that this clearly induces an Eulerian circuit in the underlying static graph $S^{c_1,c_2}_n$; furthermore, since $(s_1,t_1), \ldots, (s_n, t_n)$ is an exploration in $(S_n, \tau)$, it follows that $s_1 < t_1 < s_2 < t_2 < \ldots < s_n < t_n$, as desired. 

Suppose now that $(S^{c_1,c_2}_n, \sigma)$ is temporally Eulerian and that this fact is witnessed (without loss of generality -- up to relabeling of vertices) by the temporal Eulerian circuit 
\[
    (c_1x_1, s_1),(x_1c_2, t_1),(c_2x_2, s_2)(x_2c_1, t_2), (c_1x_3, s_3),(x_3c_2, t_3), \ldots,  (c_2x_n, s_n)(x_nc_1, t_n).
\]
Then, by the definition of $\sigma$ in terms of $\tau$ and by similar arguments to the previous case, we have that $(s_1,t_1), \ldots, (s_n, t_n)$ is an exploration of $(S_n,\tau)$.
\end{proof}

\section{Interval-membership-width}\label{sec_edge_expl:fpt_algs}

As we saw in the previous section, both $\tempeuler(k)$ and $\starexp(k+1)$ are $\np$-complete for all $k \geq 3$ even on instances whose underlying static graphs are very sparse (for instance even on graphs with vertex cover number $2$). Clearly this means that any useful parameterization must take into account the \emph{temporal structure} of the input. As we discussed previously, other authors have already proposed measures of this kind such as the temporal feedback vertex number \cite{temporalFeedbackVertexSet} or temporal analogues of tree-width \cite{temp_tw}. However these measures are all bounded on temporal graphs for which the underlying static graph has bounded feedback vertex number and tree-width respectively. Our reductions therefore show that $\tempeuler(k)$ is para-$\np$-complete with respect to these parameters. Thus we do indeed need some new measure of temporal structure. To that end, here we introduce such a parameter called \emph{interval-membership-width} which depends only on temporal structure and not on the structure of the underlying static graph. Parameterizing by this measure, we will show that both $\tempeuler(k)$ and $\starexp(k)$ lie in $\fpt$. 

To first convey the intuition behind our width measure, consider again the simplest case of the problem of deciding whether a temporal graph is temporally Eulerian; namely: consider $\tempeuler(1)$. As we noted earlier, this problem is trivially solvable in time linear in $\lvert E(G)\rvert$. The same is true for any $\tempeuler(k)$-instance $(G, \tau)$ in which every edge is assigned a `private' interval of times: that is to say that, for all distinct edges $e$ and $f$ in $G$, either $\max \tau (f) < \min \tau (e)$ or $\max \tau (e) < \min \tau (f)$. This holds because, on instances of this kind, there is only one possible relative ordering of edges available for an edge-exploration. It is thus natural to expect that, for graphs whose edges have intervals that are `almost private' (defined formally below), we should be able to deduce similar tractability results. 

Towards a formalization of this intuition, suppose that we are given a temporal graph $(G, \tau)$ which has precisely two edges $e$ and $f$ such that there is a time $t$ with $\min \tau (e) \leq t \leq \max \tau (e)$ and $\min \tau (f) \leq t \leq \max \tau (f)$.
It is easy to see that the $\tempeuler(k)$ problem is still tractable on graphs such as $(G, \tau)$ since there are only two possible relative edge-orderings for an edge exploration of $(G, \tau)$ (depending on whether we choose to explore $e$ before $f$ or $f$ before $e$). These observations lead to the following definition of \emph{interval-membership-width} of a temporal graph (see Figure~\ref{fig:imw_example}). 
\begin{definition}\label{definition:width_measure}
The \emph{interval-membership  sequence} of a temporal graph $(G, \tau)$ is the sequence  $(F_t)_{t \in [\Lambda]}$ of edge-subsets of $G$ where $F_t:= \{e \in E(G): \min \tau(e) \leq t \leq \max \tau(e)\}$ and $\Lambda$ is the lifetime of $(G, \tau)$. The \emph{interval-membership-width} of $(G, \tau)$ is the integer $\imw(G, \tau) := \max_{t \in \mathbb{N}} \lvert F_t\rvert$.
\end{definition}
\begin{figure}
    \centering
    \begin{tikzpicture}[-latex , auto , node distance =1.4 cm and 1.4 cm , on grid , semithick , state/.style ={ circle, top color =gray, bottom color = white, draw, black, text=black, minimum width =0.2 cm}]
        \node[state] (c)   [] {$c$};
        \node (dummy)      [above = of c     ] {};
        \node (tl)  [left  = 0.5 of dummy ] {};
        \node[state] (tll) [left  = of tl    ] {$w$};
        \node (tr)  [right = 0.5 of dummy ] {};
        \node[state] (trr) [right = of tr    ] {$x$};
        \node (Dummy)      [below = of c] {};
        \node (l1)  [left  = 0.5 of Dummy ] {};
        \node[state] (l2)  [left  = of l1] {$z$};
        \node (r1)  [right = 0.5 of Dummy ] {};
        \node[state] (r2)  [right = of r1] {$y$};
        \draw[-] (c) -- node[left =0.15 cm] {$1,9$} (tll);
        \draw[-] (c) -- node[left =0.15 cm] {$3,5$} (trr);
        \draw[-] (c) -- node[left =0.15 cm] {$4,6$} (r2);
        \draw[-] (c) -- node[left =0.15 cm] {$7,8,9$} (l2);
        
        \node (my_text)  [below  = 2cm of c ] {$(S_4, \tau)$};
    \end{tikzpicture}
    \caption{A temporal star $(S_4, \tau)$ with interval-membership sequence: $F_1 = F_2 = \{cw\}$, $F_3 = \{cw, cx\}$, $F_4 = F_5 = \{cw, cx, cy\}$, $F_6 = \{cw, cy\}$ and $F_7= F_8 = F_9 = \{cw, cz\}$.}
    \label{fig:imw_example}
\end{figure}
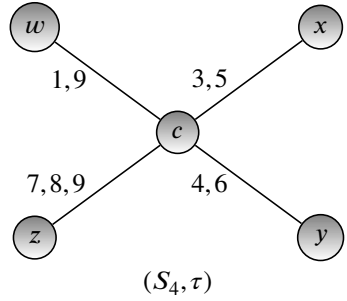
Note that a temporal graph has unit interval-membership-width if and only if every edge is active at times spanning a `private interval'. Furthermore, we point out that the interval-membership  sequence of a temporal graph is not the same as the sequence $(E_t(G, \tau))_{t \in \mathbb{N}}$. In fact, although $\max_{t \in \mathbb{N}}\lvert E_t(G, \tau)\rvert \leq \imw(G, \tau)$, there exist classes $\mathcal{C}$ of temporal graphs with unbounded interval-membership-width but such that every temporal graph in $\mathcal{C}$ satisfies the property that at most one edge is active at any given time. To see this consider any graph $H$ with edges $e_1, \ldots, e_m$ and let $(H, \nu)$ be the temporal graph defined by $\nu (e_i) := \{i,m+i\}$. Clearly $\max_{i \in \mathbb{N}} \lvert E_i(H, \nu)\rvert = 1$, but we have $\imw(H, \nu) = m$. 

Before continuing, we first note how to compute the interval-membership  sequence of a temporal graph $(G, \tau)$ with lifetime $\Lambda$ in $\mathcal{O}\bigl(\imw(G, \tau) \Lambda) \bigr)$ time. We point out now that, since this running time is linear in $\Lambda$, all of the running times in our following (in this section and the next) algorithmic results parameterized by interval-membership-width also include the time needed to compute the interval-membership sequence. 

\begin{lemma}\label{lemma:construction_algorithm}
There is an algorithm that, given a temporal graph $(G, \tau)$ with every edge active at-least once and with lifetime $\Lambda$, computes the interval-membership sequence of $(G, \tau)$ in time $\mathcal{O}(w \Lambda)$ where $w = \imw(G, \tau)$.
\end{lemma}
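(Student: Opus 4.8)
The plan is to compute the interval-membership sequence by a sweep-line over the time axis $1,2,\dots,\Lambda$, maintaining the current member set in a dynamic list so that updates cost $\mathcal{O}(1)$ each.

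\textbf{Preprocessing.} First I would record the observation that makes the running time attainable: since an edge $e$ is active only at times in $[\min\tau(e),\max\tau(e)]$, every time-edge $(e,t)$ puts $e$ into $F_t$, so $\sum_{e \in E(G)} \lvert \tau(e)\rvert \le \sum_{t=1}^{\Lambda} \lvert F_t\rvert \le w\Lambda$; in particular $\lvert E(G)\rvert \le w\Lambda$, so the input has size $\mathcal{O}(w\Lambda)$. A single pass over the input then computes $a_e := \min\tau(e)$ and $b_e := \max\tau(e)$ for every edge $e$ and fills two length-$\Lambda$ arrays of edge-lists, $\mathrm{Start}$ and $\mathrm{End}$, appending $e$ to $\mathrm{Start}[a_e]$ and to $\mathrm{End}[b_e]$. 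This costs $\mathcal{O}(w\Lambda)$.

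\textbf{Sweep.} Maintain a doubly linked list $L$ holding the edges currently ``in interval'', together with a pointer from each such edge to its node in $L$ so that deletions take constant time. For $t = 1, \dots, \Lambda$: (i) for each $e \in \mathrm{Start}[t]$, insert $e$ at the head of $L$ and store its node pointer; (ii) emit $F_t$ as a copy of the current contents of $L$; (iii) for each $e \in \mathrm{End}[t]$, delete $e$'s node from $L$. Because the interval $[a_e,b_e]$ is closed, inserting at step (i) before emitting and deleting at step (iii) after emitting guarantees $e \in F_t$ precisely when $a_e \le t \le b_e$, which is the definition of $F_t$; a short induction on $t$ makes this rigorous.

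\textbf{Running time and the one thing to watch.} Over the whole sweep each edge is inserted once and deleted once, contributing $\mathcal{O}(\lvert E(G)\rvert) = \mathcal{O}(w\Lambda)$ in total; the $\Lambda$ copy operations of step (ii) cost $\sum_{t=1}^{\Lambda}\mathcal{O}(1 + \lvert F_t\rvert) = \mathcal{O}(\Lambda + w\Lambda) = \mathcal{O}(w\Lambda)$ since $\lvert F_t\rvert \le w$ for every $t$, and adding the preprocessing cost gives the claimed $\mathcal{O}(w\Lambda)$ bound. There is no real obstacle here; the only care needed is in the data-structure choices — bucket arrays indexed by time rather than sorting (to avoid a logarithmic overhead) and a linked list with per-edge back-pointers (for $\mathcal{O}(1)$ deletions) — together with the input-size bound above, which is exactly what lets the preprocessing fit inside $\mathcal{O}(w\Lambda)$. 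As usual we work in the standard RAM model, so arithmetic comparisons on the (polynomially bounded) timestamps are $\mathcal{O}(1)$.
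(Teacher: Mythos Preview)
Your proof is correct. Both your argument and the paper's rest on the same double-counting identity
\[
\sum_{e\in E(G)}\lvert\{t : e\in F_t\}\rvert \;=\; \sum_{t\in[\Lambda]}\lvert F_t\rvert \;\le\; w\Lambda,
\]
and the subsidiary bound $\sum_e\lvert\tau(e)\rvert\le w\Lambda$ which controls the preprocessing cost. The difference is purely in how the bags are filled. The paper is edge-centric: after computing $m_e=\min\tau(e)$ and $M_e=\max\tau(e)$ it simply appends $e$ to each $F_t$ for $m_e\le t\le M_e$, charging the work for edge $e$ directly to $\lvert\{t:e\in F_t\}\rvert$; no auxiliary data structures are needed beyond the output arrays. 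Your sweep is time-centric: you bucket start/end events and maintain the current bag as a linked list with back-pointers. Your approach carries a little more machinery but yields the bags in chronological order with incremental updates, which is handy if one wants to process them online; the paper's approach is more elementary and slightly shorter to state. Either way the analysis is the same summation, so there is no substantive gap between them.
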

\begin{proof}
Supposing $E(G) = \{e_1, \dots e_m\}$, consider the following algorithm:
\begin{itemize}
    \item initialize a list $(F_t)_{t \in [\Lambda]}$ with $F_t = \emptyset$ for all $t$,
    \item for each edge $e \in E(G)$, 
    \begin{enumerate}[label=\textbf{S\arabic*}]
   	 \item compute $m_e = \min \tau(e)$ and $M_e = \max \tau(e)$  \label{step:mk_bounds}
	 \item add $e$ to every set $F_i$ with $m_e \leq i \leq M_e$. \label{step:fill}
    \end{enumerate}
\end{itemize}
For each edge $e$, Step~\ref{step:mk_bounds} takes $\mathcal{O}(\lvert \tau(e)\rvert)$ time while Step~\ref{step:fill} takes $\mathcal{O}(\lvert \{t : e \in F_t\}\rvert)$ time. Thus, since $\tau(e) \subseteq \{t : e \in F_t\}$, steps~\ref{step:mk_bounds} and~\ref{step:fill} together take time $\mathcal{O}(\lvert \{t : e \in F_t\}\rvert)$. Summing over all edges of $G$, we have that, since every edge is active at-least once, the whole algorithm runs in time
\[\mathcal{O}\Bigl( \sum_{e \in E(G)}\lvert \{t : e \in F_t\}\rvert\Bigr) = \mathcal{O}\Bigl( \sum_{t \in [\Lambda]}\lvert F_t\rvert\Bigr) = \mathcal{O}(w \Lambda).\]
\end{proof}

Armed with the notion of interval-membership-width, we will now show that both $\tempeuler(k)$ and $\starexp(k)$ are in $\fpt$ when parameterized by this measure. We will do so first for $\tempeuler(k)$ (Theorem \ref{thm:temp_Euler_dyn_prog}) and then we will leverage the reduction of Lemma  \ref{lemma:Euler_reduction} to deduce the fixed-parameter-tractability of $\starexp(k)$ as well (Corollary \ref{corollary:starExp_algorithm}). 

\begin{theorem}\label{thm:temp_Euler_dyn_prog}
There is an algorithm that, given any temporal graph $(G, \tau)$ with lifetime $\Lambda$, decides whether $(G, \tau)$ is a yes-instance of $\tempeuler(k)$ in time $\mathcal{O}(w^3 2^w \Lambda)$ where $w = \imw(G, \tau)$ is the interval-membership-width of $(G, \tau)$.
\end{theorem}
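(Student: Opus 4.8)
The plan is a left-to-right dynamic program over the interval-membership sequence $(F_t)_{t \in [\Lambda]}$, which we first compute in time $\mathcal{O}(w\Lambda)$ via Lemma~\ref{lemma:construction_algorithm}. As easy preconditions, reject immediately unless $G$ is connected and every vertex of $G$ has even degree (no Euler circuit exists otherwise). The key point is that, because temporal walks are strict, \emph{at most one} time-edge of a temporal Euler circuit has any given time; thus a temporal Euler circuit is exactly an Euler circuit $e_1\ldots e_m$ of $G$ together with strictly increasing times $t_i\in\tau(e_i)$, and in particular $m\le\Lambda$. For each ``gap'' $t\in\{0,1,\ldots,\Lambda\}$ partition $E(G)$ into $\mathrm{Past}_t=\{e:\max\tau(e)\le t\}$, $\mathrm{Future}_t=\{e:\min\tau(e)>t\}$, and $G_t:=F_t\cap F_{t+1}=\{e:\min\tau(e)\le t<t+1\le\max\tau(e)\}$, noting $|G_t|\le w$. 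If $(e_1,t_1),\ldots,(e_m,t_m)$ is a temporal Euler circuit and $D$ is the set of edges used at time $\le t$, then $\mathrm{Past}_t\subseteq D\subseteq \mathrm{Past}_t\cup G_t$, so $D=\mathrm{Past}_t\cup S$ for some $S\subseteq G_t$: at most $2^w$ possibilities per gap. Moreover the edges used by time $t$ form an Euler trail of $(V,D)$ from the circuit's start vertex $v_0$ to the current vertex $v$, and the remaining edges form an Euler trail of $(V,E(G)\setminus D)$ from $v$ back to $v_0$; hence each of $(V,D)$ and $(V,E(G)\setminus D)$ is connected on its support and has either no odd-degree vertex or exactly the two odd-degree vertices $v_0,v$.

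\textbf{States.} The state at gap $t$ is a pair $(S,v)$ with $S\subseteq G_t$ and $v$ the current vertex, and we call it \emph{reachable} if there is a strict temporal walk using exactly the edges of $\mathrm{Past}_t\cup S$, all at times $\le t$, that is an Euler trail of $(V,\mathrm{Past}_t\cup S)$ ending at $v$. When $(V,\mathrm{Past}_t\cup S)$ is not all-even, $v$ (and hence $v_0$, the other odd vertex) is forced to lie in the set $R_t$ consisting of the at most two odd-degree vertices of $(V,\mathrm{Past}_t)$ together with all endpoints of edges in $G_t$, of edges with $\max\tau=t$, and of edges with $\min\tau=t+1$; since each of these edge sets has $\mathcal{O}(w)$ edges, $|R_t|=\mathcal{O}(w)$. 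When $(V,\mathrm{Past}_t\cup S)$ is all-even the trail so far is closed, $v=v_0$, and we still keep this vertex in the state: one checks by a short case analysis that it again lies in $R_t$, since the next (or last) edge incident to $v_0$ either lies in $G_t$, or has $\min\tau=t+1$, or has $\max\tau=t$ whenever $v_0\notin V(G_t)$. So there are $\mathcal{O}(w\,2^w)$ states per gap.

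\textbf{Transitions and running time.} The base case is gap $0$ with no edges used, and the answer is ``yes'' iff some state is reachable at gap $\Lambda$, where $\mathrm{Past}_\Lambda=E(G)$ (all edges used) and, since $G$ is connected and all-even, the resulting Euler trail is automatically a circuit. To pass from gap $t$ to gap $t+1$ we ``process time $t+1$'': from $(S,v)$ we may either use no edge at time $t+1$, permitted only if every edge with $\max\tau(e)=t+1$ already lies in $\mathrm{Past}_t\cup S$; or pick an edge $e$ with $t+1\in\tau(e)$, $e$ incident to $v$, and $e\notin\mathrm{Past}_t\cup S$, traverse it, and move to its other endpoint --- at most $|F_{t+1}|=\mathcal{O}(w)$ choices. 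In each case we re-express the new used-set as $\mathrm{Past}_{t+1}\cup S'$, verify the parity conditions (using precomputed degree-parities of $\mathrm{Past}_{t+1}$; connectivity is automatic because every new edge is incident to the current vertex, which lies in the current used-subgraph), recompute $R_{t+1}$, and mark the new state reachable; no information beyond $(E(G)\setminus D, v, t)$ is needed, all of which the state records. Each transition touches only $\mathcal{O}(w)$ edges, so it costs $\mathcal{O}(w)$; multiplying $\mathcal{O}(\Lambda)$ gaps by $\mathcal{O}(w\,2^w)$ states by $\mathcal{O}(w)$ choices by $\mathcal{O}(w)$ work gives $\mathcal{O}(w^3 2^w\Lambda)$, which also absorbs the $\mathcal{O}(w\Lambda)$ preprocessing.

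\textbf{Main obstacle.} The delicate part is exactly the case in which the portion built so far is already a closed Eulerian subgraph: then the current vertex is not pinned down by any parity argument, and one seemingly must remember which of the $|V(D)|$ possible vertices is the base $v_0$ at which the remaining edges re-attach. This is genuinely necessary --- there are temporal graphs that fail to be temporally Eulerian only because the unique feasible traversal of a closed-off piece is based at the ``wrong'' vertex --- so the crux of the proof is establishing that $v_0$ is nonetheless confined to the $\mathcal{O}(w)$-size set $R_t$. This rests on analysing how such a state is \emph{created} (the base vertex appears as an endpoint of the time-edge used at the current time, hence of an edge of $G_t$ or one with $\max\tau=t$) and how it is \emph{consumed} (it must be an endpoint of the next edge to be traversed, and a first-used edge of a new block necessarily has $\min\tau$ equal to the first time of that block), together with jumping directly between the gap where such a state is created and the gap where it is left over any intervening run of empty time steps, while checking that no edge is forced to be used in between. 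Once this bound on the relevant vertices is in hand, correctness of the dynamic program and the stated running time follow as above.
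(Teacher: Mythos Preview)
Your dynamic program is essentially the same as the paper's: sweep over times, record which edges of the current bag have been used, and remember the current endpoint of the partial trail. The one structural difference is that the paper keeps the \emph{start} vertex $s$ explicitly in the state (so states are triples $(f,s,x)$), whereas you drop it, noting that in an open trail $s$ is the other odd vertex of the used-edge set and in a closed trail $s=v$. That observation is correct and is a pleasant simplification.

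The genuine gap is precisely where you flag it. Your claimed containment $v_0\in R_t$ does \emph{not} hold at every gap. Take two triangles glued at $v_0$, one with single-time edges at $1,2,3$ and the other at $20,21,22$; at gap $t=4$ the unique reachable state is closed at $v_0$, yet $\mathrm{Past}_4$ has no odd vertex, $G_4=\emptyset$, and no edge has $\max\tau=4$ or $\min\tau=5$, so $R_4=\emptyset$. Your ``jumping'' remedy is only a gesture: the DP processes gaps in order and does not know in advance at which later gap such a closed state will be consumed, and between creation and consumption other edges (not incident to $v_0$) may force themselves to be used, so one cannot simply skip ahead. As written, the state-space bound in the closed case is unproved. (A small side issue: ``the at most two odd-degree vertices of $(V,\mathrm{Past}_t)$'' is only valid once one has argued that all but two odd vertices of $\mathrm{Past}_t$ lie in $V(G_t)$, else the instance is already infeasible; this is easy but should be said.)

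The paper resolves this obstacle differently and more robustly. It does \emph{not} try to name a concrete set $R_t$ containing the current vertex; instead it proves (its Claim~RT2) that the set $\mathcal{X}_i$ of all current vertices occurring in any reachable state at time $i$ has cardinality at most $4w$, by a pigeonhole argument: if there were too many such vertices not touching $F_i$, two of them would witness an edge whose entire activity interval lies strictly between the last move of one walk and time $i$, contradicting the invariant that every edge with last appearance before $i$ has already been traversed. Combined with a separate bound on the start vertex (Claim~RT1), this gives $|L_i|=\mathcal{O}(w^2 2^w)$ and hence the stated $\mathcal{O}(w^3 2^w\Lambda)$ running time. The same RT2 argument would rescue your variant: it bounds the number of possible current vertices globally, without needing your creation/consumption case analysis or any jumping.
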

\begin{proof}
Let  $(F_t)_{t \in [\Lambda]}$ be the interval-membership sequence of $(G, \tau)$ and suppose without loss of generality that $F_1$ is not empty.

We will now describe an algorithm that proceeds by dynamic programming over the sequence $(F_i)_{i \in [\Lambda]}$ to determine whether $(G, \tau)$ is temporally Eulerian. For each set $F_i$ we will compute a set $L_i \subseteq F_i^{\{0,1\}} \times V(G) \times V(G)$ consisting of triples of the form $(f, s, x)$ where $s$ and $x$ are vertices in $G$ and $f$ is a function mapping each edge in $F_i$ to an element of $\{0,1\}$. Intuitively each entry $(f,s,x)$ of $L_i$ corresponds to the existence of a temporal walk starting at $s$ and ending at $x$ at time at most $i$ and such that, for any edge $e\in F_i$, we will have $f(e) = 1$ if and only if $e$ was traversed during this walk.  

We will now define the entries $L_i$ recursively starting from the dummy set $L_0 := \{(\mathbf{0}, x, x) :\exists e \in F_1 \text{ incident with } x\}$ where $\mathbf{0}: e \in F_1 \mapsto 0$ is the function mapping every element in $F_1$ to $0$. Take any $(f, s, y)$ in $F_i^{\{0,1\}} \times V(G) \times V(G)$. For $(f, s, y)$ to be in $L_i$ we will require there to be an entry $(g, s, x)$ of $L_{i-1}$ such that
\begin{equation}\label{equation:TE1}
    g(e) = 1 \text{ for all } e \in F_{i-1} \setminus F_i 
\end{equation}
and such that the one of the following cases holds: either
\begin{enumerate}[label=\textbf{C\arabic*}]
    \item $y = x$ and $f(e) = 1$ if and only if $e \in F_{i-1} \cap F_i$ and $g(e) = 1$, \label{case:stay_put} \\
    or
    \item there exists an edge $xy$ in $G$ such that: \label{case:move}
        \begin{enumerate}[label=\textbf{C2.P\arabic*}]
            \item $xy \in E_i(G, \tau) \setminus \{e \in F_i: g(e) = 1\}$ and \label{property:C2.1}
            \item $f(e) = 1$ if and only if $g(e) = 1$ or $e = xy$. \label{property:C2.2}
        \end{enumerate}
\end{enumerate}

The Cases \ref{case:stay_put} and \ref{case:move} correspond to the the two available choices we have when extending a temporal $(s,x)$-walk at time $i$: either we stay put at $x$ (Case \ref{case:stay_put}) or we find some new edge $xy$ active at time $i$ (Case \ref{case:move}) which has never been used before (Property \ref{property:C2.1}) and add it to the walk (Property \ref{property:C2.2}). Equation (\ref{equation:TE1}) ensures that we filter out partial solutions that we already know cannot be extended to a Eulerian circuit. To see this, note that, if an edge $e$ will never appear again after time $i-1$ and we have $g(e) = 0$, then there is no way of extending the temporal walk represented by the triple $(g, s, x)$ to an Eulerian circuit in $(G, \tau)$ because one edge will always be left out (namely the edge~$e$). 

We claim that the input $(G,\tau)$ is temporally Eulerian if and only if $L_\Lambda$ contains an entry $(\mathbf{1}, s, x)$ with $s=x$ and such that $\mathbf{1}$ is the constant all-$1$ function $\mathbf{1}: y \in F_\Lambda \mapsto 1$. To show this, we will prove the following stronger claim. 
\begin{claim}\label{claim:correctness_alg}
    For all $i \in [\Lambda]$, $L_i$ contains an entry $(f, s, x)$ if and only if there exists a temporal walk $(e_1,t_1) \ldots (e_{p},t_{p})$ starting at $s$ and ending at $x$ with $t_p \leq i$ and in which no edge is repeated and such that:
    \begin{enumerate}[label=\textbf{IH\arabic*}]
        \item $(F_1 \cup \dots \cup F_{i-1}) \setminus F_i \subseteq \{e_1, \ldots, e_{p-1}\}$ (i.e. every edge whose last appearance is before time $i$ is traversed by the walk) and \label{prop:IH1}
        \item for all $e \in F_i$, we have $f(e) = 1$ if $e \in \{e_1, \ldots, e_{p}\}$ and $f(e) = 0$ otherwise (i.e. $f$ correctly records which edges in $F_i$ have been used in a walk). \label{prop:IH2} 
    \end{enumerate}
\end{claim}
\begin{proof}[Proof of Claim~\ref{claim:correctness_alg}]
We show this by induction on $i$. The Claim holds trivially for $i = 0$, so suppose now that we are at some time $i > 0$ and hypothesise that the Claim holds for time $i-1$. Furthermore denote by $W_i(f,s,x)$ the set of all temporal temporal $(s,x)$-walks $(e_1,t_1) \ldots (e_{p},t_{p})$ with $t_p \leq i$ which satisfy Properties \ref{prop:IH1} and \ref{prop:IH2}.

$\mathbf{(\Longrightarrow)}$ First we will show that if $(f, s, y)$ is in $L_i$, then $W_i(f,s,x)$ is non-empty. By the construction of $L_i$, we know that, for $(f, s, y)$ to be in $L_i$, there must have been an element $(g, s, x)$ of $L_{i-1}$ satisfying Equation (\ref{equation:TE1}) from which we built $(f,s,y)$ according to either Case \ref{case:stay_put} or Case \ref{case:move}. 

Suppose we applied Case \ref{case:stay_put} to add $(f,s,y)$ to $L_i$ (i.e. we `extended' some walk in $W_{i-1}(g,s,x)$ by deciding not to move). Then $x = y$ and we know that $f(e) = 1$ if and only if $g(e) = 1$. Notice that any walk corresponding to $(f,s,y)$ cannot fail to visit some edge in $E_{i-1}(G, \tau)$ that will never again be active after time $i-1$ since we know that $(g, s, x)$ satisfies Equation (\ref{equation:TE1}). In particular $(f, s, y)$ satisfies Property \ref{prop:IH1} (since $g$ does). Furthermore, since $f(e) = 1$ if and only if $g(e) = 1$ and since $g$ satisfies Property \ref{prop:IH2} (by induction), we know that $f$ must also satisfy Property \ref{prop:IH2}. Thus we have shown that, if we applied Case \ref{case:stay_put} to add $(f,s,y)$ to $L_i$, then $W_i(f,s,y) \neq \emptyset$.

Suppose instead that we applied Case \ref{case:move} to add $(f,s,y)$ to $L_i$. In other words suppose we found an edge $xy$ active at time $i$ with which we wish to extend some walk $W := (e_1,t_1), \ldots, (e_{p},t_{p})$ in $W_{i-1}(g,s,x)$ which starts at $s$ and ends at $x$. Note that we can infer that $W' := (e_1,t_1) \ldots (e_{p-1},t_{p})(xy,i)$ is a valid temporal $(s,y)$-walk with no repeated edges since: 
\begin{itemize}
    \item $W$ has no repeated edges (by the induction hypothesis) and 
    \item $xy$ was not traversed by $W$ (by Property \ref{property:C2.1}) and 
    \item $t_p \leq i -1$ (since $W \in W_{i-1}(g,s,x)$).
\end{itemize} Thus the fact that $g$ satisfies equation (\ref{equation:TE1}) combined with the induction hypothesis implies that every edge whose last appearance is before time $i$ is traversed by $W'$ (i.e. $W'$ satisfies Property \ref{prop:IH1}). Furthermore $f$ satisfies Property \ref{prop:IH2} since $g$ does and since $f(e) = 1$ if and only if $g(e) = 1$ or $e = xy$ (by Property \ref{property:C2.2}). Thus we have shown that, if $(f, s, y) \in L_i$, then $W_i(f,s,y) \neq \emptyset$.

$\mathbf{(\Longleftarrow)}$ Conversely, we will now show that, if $W_i(f,s,y)$ is non-empty, then $(f,s,y) \in L_i$. Let $W'$ be an element of $W_i(f,s,y)$ and let $(xy, j)$ be the last time-edge traversed by $W'$ (note $j \leq i$). 

If $j < i$ then, by the induction hypothesis, there exists an entry $(g,s,y) \in L_{i-1}$ with $ W' \in W_{i-1}(g,s,y)$. But then by the construction of $L_i$ from $L_{i-1}$ we have that $(f,s,y) \in L_i$. 

Thus suppose $j = i$. Then $W' - (xy, j)$ is a temporal $(s,x)$-walk ending at time at most $i-1$ satisfying Property \ref{prop:IH1}. Furthermore, by the induction hypothesis, there must be a $(g,s,x) \in L_{i-1}$ which satisfies Equation (\ref{equation:TE1}) and such that $W' - (xy, j) \in W_{i-1}(g,s,x)$. Now note that, since $(f,s,y)$ satisfies Properties \ref{prop:IH1} and \ref{prop:IH2}, we have that Properties \ref{property:C2.1} and \ref{property:C2.2} hold as well: thus $(f,s,y) \in L_i$.
\end{proof}

Finally we consider the running time of the algorithm. First of all notice that we can compute $L_{i+1}$ from $L_i$ in time at most $(\lvert E_{i+1}(G, \tau)\rvert + 1) \cdot \lvert L_i\rvert \leq (\lvert F_{i+1}\rvert + 1) \cdot \lvert L_i\rvert$. To see this, note that we construct the elements of $L_{i+1}$ by iterating through the elements of $L_{i}$ and considering for each one the $\lvert F_{i+1}\rvert + 1$ ways of taking a next step in a temporal walk at time $i$. Since we perform this computation $\Lambda$ times and since $\lvert F_i\rvert \leq \imw(G, \tau) = w$, the whole algorithm runs in time $\mathcal{O}(w \Lambda\max_{i \in \Lambda}\lvert L_i\rvert)$. Thus all that remains to be shown to complete the proof is that $\lvert L_i\rvert$ is $\mathcal{O}(w^2 2^w)$ for all $i$. Note that, from its definition, we already know that $L_i$ has cardinality at most $\mathcal{O}(2^wn^2)$ since $L_i \subseteq F_i^{\{0,1\}} \times V(G) \times V(G)$. To improve this bound, we will show that the following two statements hold: 
\begin{enumerate}[label=\textbf{RT\arabic*}]
    \item there exists a time $t$ such that every temporal Eulerian circuit in $(G, \tau)$ must start with a vertex incident with an edge in the bag $F_t$ of the interval-membership sequence of $(G, \tau)$; \label{running_bound_claim_start_vertex}
    \item for all $i$, let $\mathcal{X}_i \subseteq V(G)$ be the set of vertices of $G$ defined as $\mathcal{X}_i := \{x \in V(G): (f,s,x) \in L_i\}$; then $\mathcal{X}_i$ has cardinality at most $4w$, where $w = \imw(G, \tau)$. \label{running_bound_claim_end_vertex}
\end{enumerate}
To see why it suffices to prove claims \ref{running_bound_claim_start_vertex} and \ref{running_bound_claim_end_vertex}, notice that they imply that we not only have $L_i \subseteq F_i^{\{0,1\}} \times V(G) \times V(G)$ (which was how we defined $L_i$ in the first place) but in fact there must always exist a $t \in [\Lambda]$ and a subset $\mathcal{X}_i \subseteq V(G)$ (for all $i$) such that $L_i$ is always of the form \[L_i \subseteq F_i^{\{0,1\}} \times V(F_t) \times \mathcal{X}_i\] where both $\lvert V(F_t)\rvert$ and $\lvert \mathcal{X}_i\rvert$ are $\mathcal{O}(w)$. This would clearly then imply that $\lvert L_i\rvert$ is $\mathcal{O}(w^2 2^w)$ for all $i$, as desired.

\begin{proof}[Proof of Claim~\ref{running_bound_claim_start_vertex}]
Choose $t \in \mathbb{N}$ be greatest possible such that $\bigcup_{j \in [t]} F_j \subseteq F_t$. Suppose by way of contradiction that there exists a temporal Eulerian circuit that starts at a vertex $s$ with $s$ \emph{not} incident with any edge in $F_t$. Let $t'$ be the earliest time such that the bag $F_{t'}$ contains an edge which which $s$ is incident. 

Notice that, since $t$ was chosen greatest possible such that $\bigcup_{j \in [t]} F_j \subseteq F_t$ and since $s$ is not incident with any edge in $F_t$, it follows that $t' > t$ and that there exists an edge $e \in F_t \setminus F_{t'}$. But then we have a contradiction since $\max(\tau(e)) \leq t < t'$ and, by time $t'$, $e$ has not yet been visited by the temporal Eulerian circuit starting at $s$ (i.e. any such circuit never vists the edge $e$). 
\end{proof}

\begin{proof}[Proof of Claim~\ref{running_bound_claim_end_vertex}]
Seeking a contradiction, suppose $\lvert \mathcal{X}_i\rvert \geq 4w + 1$. Since $\lvert F_i\rvert \leq w$, the set $\mathcal{X}_{\not \in i}$ of elements of $\mathcal{X}_i$ that are not incident with any edge in $F_i$ consists of at least $2w + 1$ vertices. Let $\xi: \mathcal{X}_{\not \in i} \to E(G)$ be the map associating to each vertex $z$ in $\mathcal{X}_{\not \in i}$ an edge $\xi(z)$ incident with $z$ in $G$ such that the last appearance of $\xi(z)$ is latest possible. 

Pick a vertex $z \in \mathcal{X}_{\not \in i}$ such that $\max \tau(\xi(z)) \leq \max \tau(\xi(z')) $ for any other $z' \in \mathcal{X}_{\not \in i}$. Since $\mathcal{X}_{\not \in i}$ contained at least $2w+1$ elements and since $\lvert F_{\max \tau(\xi(z))} \cup F_i\rvert \leq 2w$, there must be an element $y \in \mathcal{X}_{\not \in i}$ such that \[\max \tau(\xi(z)) < \min \tau(\xi(y)) < \max \tau(\xi(y)) < i.\]
By the definition of $\mathcal{X}_i$ and since $z \in \mathcal{X}_i$, there is some $(f,s,z) \in L_i$ and, by the previous Claim, there is a walk $W \in W(f,s,z)$. Notice that, since $W$ ends at the vertex $z$, it must be that the last time we `took a step' on $W$ was at a time at most $\max \tau(\xi(z))$; in particular this means that we did not move from $z$ at time $i$. But then, since $\max \tau(\xi(y)) < i$, $y$ never appears again after time $i -1$ and hence $W$ never traverses $\xi(y)$: this contradicts Property \ref{prop:IH1}.
\end{proof}
\end{proof}

As a corollary of Theorem \ref{thm:temp_Euler_dyn_prog}, we can leverage the reduction of Lemma  \ref{lemma:Euler_reduction} to deduce that $\starexp(k)$ is in $\fpt$ parameterized by the interval-membership-width. 

\begin{corollary}\label{corollary:starExp_algorithm}
There is an algorithm that, given a $\starexp(k)$ instance $(S_n,\tau)$, decides whether $(S_n,\tau)$ is explorable in time $\mathcal{O}(w^3 2^{3w} \Lambda)$ where $w = \imw(S_n, \tau)$ and $\Lambda$ is the lifetime of the input.
\end{corollary}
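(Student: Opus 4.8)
The plan is to compose the reduction of Lemma~\ref{lemma:Euler_reduction} with the dynamic-programming algorithm of Theorem~\ref{thm:temp_Euler_dyn_prog}. Given a $\starexp(k)$ instance $(S_n,\tau)$, I would first apply the polynomial-time mapping of Lemma~\ref{lemma:Euler_reduction} (for this value of $k$) to obtain a $\tempeuler(k-1)$ instance $(D_n,\sigma)$. By the first conclusion of that lemma, $(S_n,\tau)$ is explorable if and only if $(D_n,\sigma)$ is temporally Eulerian, so it suffices to decide the latter.

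Next I would control the two quantities governing the running time of Theorem~\ref{thm:temp_Euler_dyn_prog}, namely $\imw(D_n,\sigma)$ and the lifetime of $(D_n,\sigma)$. The second conclusion of Lemma~\ref{lemma:Euler_reduction} gives exactly $\imw(D_n,\sigma)\le 3\,\imw(S_n,\tau)=3w$. For the lifetime, one reads off from the explicit formula for $\sigma$ in the proof of Lemma~\ref{lemma:Euler_reduction} that every time occurring in $(D_n,\sigma)$ arises from a time of the preprocessed copy of $(S_n,\tau)$ by at most doubling and adding $1$; since that preprocessing inflates the lifetime by only a constant factor, the lifetime $\Lambda'$ of $(D_n,\sigma)$ is $\mathcal{O}(\Lambda)$, where $\Lambda$ is the lifetime of $(S_n,\tau)$. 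Running the algorithm of Theorem~\ref{thm:temp_Euler_dyn_prog} on $(D_n,\sigma)$ with $w':=\imw(D_n,\sigma)\le 3w$ then decides temporal Eulerianness of $(D_n,\sigma)$ — equivalently, explorability of $(S_n,\tau)$ — in time $\mathcal{O}\bigl((w')^3 2^{w'}\Lambda'\bigr)=\mathcal{O}\bigl((3w)^3 2^{3w}\Lambda\bigr)=\mathcal{O}(w^3 2^{3w}\Lambda)$, and the polynomial-time reduction itself is dominated by this bound.

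Since both ingredients are already established, there is no real obstacle here; the only care needed is bookkeeping — checking that the preprocessing inside Lemma~\ref{lemma:Euler_reduction} (forcing exactly $k$ even-valued times per edge) does not blow up either the interval-membership-width beyond the stated factor $3$ or the lifetime beyond a constant factor, and that the resulting $\mathcal{O}((3w)^3)$ is absorbed into $\mathcal{O}(w^3)$. I would also note in passing that one could instead design a direct dynamic program over the interval-membership sequence of $(S_n,\tau)$ that tracks partial explorations of the star, but routing through Theorem~\ref{thm:temp_Euler_dyn_prog} is cleaner and yields the claimed bound immediately.
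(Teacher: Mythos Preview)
Your proposal is correct and follows essentially the same route as the paper: apply Lemma~\ref{lemma:Euler_reduction} to pass to a $\tempeuler(k-1)$ instance with $\imw(D_n,\sigma)\le 3w$, then invoke Theorem~\ref{thm:temp_Euler_dyn_prog}. You are in fact more careful than the paper's own proof, which silently uses the original lifetime $\Lambda$ without justifying that the lifetime of $(D_n,\sigma)$ is $\mathcal{O}(\Lambda)$ and does not remark on the preprocessing step inside Lemma~\ref{lemma:Euler_reduction}; your observations on both points are sound and worth keeping.
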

\begin{proof}
By Lemma  \ref{lemma:Euler_reduction}, we know that there is a polynomial-time reduction that maps any $\starexp(k)$ instance $(S_n, \tau)$ to a $\tempeuler(k-1)$-instance $(D_n, \sigma)$ such that
\begin{align*}
        \max_t \lvert \{&e \in E(D_n): \min(\sigma(e)) \le t \le \max(\sigma(e))\}\rvert \\ 
        &\leq 3 \max_t \lvert \{e \in E(S_n): \min(\tau(e)) \le t \le \max(\tau(e))\}\rvert. 
\end{align*}
In particular this implies that $\imw(D_n, \sigma) \leq 3 w$. Thus we can decide whether $(S_n, \tau)$ is explorable in time $\mathcal{O}(w^3 2^{3w} \Lambda)$ by applying the algorithm of Theorem \ref{thm:temp_Euler_dyn_prog} to $(D_n, \sigma)$.
\end{proof}

\section{Win-win approach to regularly spaced times}\label{sec_edge_expl:times_close_together}
In this section we will find necessary conditions for edge-explorability of temporal graphs with respect to their interval-membership-width. This will allow us to conclude that either we are given a no-instance or that the interval-membership-width is small (in which case we can employ our algorithmic results from the previous section).

We will apply this bidimensional approach to a variants of $\tempeuler(k)$ and $\starexp(k)$ in which we are given upper and lower bounds ($u$ and $\ell$ respectively) on the difference between any two consecutive times at any edge. Specifically we will show that $\starexp(k)$ is in $\fpt$ parameterized by $k$, $\ell$ and $u$ (Theorem \ref{theorem:even_explorability_algorithm}) and that $\tempeuler(k)$ is in $\fpt$ parameterized by $k$ and $u$ (Theorem \ref{thm:temp_Euler_fpt_by_k_and_u}). In other words, these results allow us to trade in the dependences on the interval-membership-width of Corollary \ref{corollary:starExp_algorithm} and Theorem \ref{thm:temp_Euler_dyn_prog} for a dependences on $k$, $\ell$, $u$ and $k$, $u$ respectively. 

We note that, for $\starexp$ instances, the closer $\ell$ and $u$ get, the more restricted the structure becomes to the point that the dependence on $\ell$ and $u$ in the running time of our algorithm vanishes when $\ell = u$. In particular this shows that the problem of determining the explorability of $\starexp(k)$-instances for which consecutive times at each edge are exactly $\lambda$ time-steps apart (for some $\lambda \in \mathbb{N}$) is in $\fpt$ parameterized solely by $k$ (Corollary \ref{corollary:even_star_algorithm}). This partially resolves an open problem of Akrida, Mertzios and Spirakis \cite{starexp} which asked to determine the complexity of exploring $\starexp(k)$-instances with evenly-spaced times.

Towards these results, we will first provide sufficient conditions for non-explorability of any $\starexp(k)$ instance (Lemma~\ref{lemma:clique-number}). These conditions will depend only on: (1) knowledge of the maximum and minimum differences between any two successive appearances of any edge, (2) the interval-membership-width and (3) the maximum number of appearances $k$ of any edge. 

\begin{lemma}\label{lemma:clique-number}
Let $(S_n, \tau)$ be a temporal star with at most $k$ times at any edge and such that every two consecutive times at any edge differ at least by $\ell$ and at most by $u$. If $(S_n, \tau)$ is explorable, then $\imw(S_n, \tau) \leq (2(k-1)u + 1)/(\ell + 1)$.
\end{lemma}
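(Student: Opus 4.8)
The plan is to argue by contraposition: assuming $\imw(S_n,\tau)$ is large, I will exhibit a large set of edges whose ``active intervals'' all overlap at a single point, and then show that an exploration cannot process all of them without a conflict, so $(S_n,\tau)$ is not explorable. Concretely, fix a time $t$ witnessing the interval-membership-width, so that $w := \imw(S_n,\tau) = |F_t|$ edges $e$ satisfy $\min\tau(e) \le t \le \max\tau(e)$. I want to understand how ``long'' the active interval $[\min\tau(e),\max\tau(e)]$ of any such edge can be. Since $e$ has at most $k$ appearances and consecutive appearances differ by at most $u$, we get $\max\tau(e) - \min\tau(e) \le (k-1)u$. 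Hence every edge of $F_t$ has both its first and last appearance inside the window $W := [t-(k-1)u,\ t+(k-1)u]$, which has length $2(k-1)u$ and therefore contains at most $2(k-1)u + 1$ integers.

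Next I would lower bound how much ``room'' each edge occupies in a valid exploration. In an exploration, the edge $e$ is visited at some pair $(t_1,t_2)$ with $t_1,t_2 \in \tau(e)$ and $t_1 < t_2$; the walk is at the center and then along $e$ from time $t_1$ to time $t_2$, so in particular these visit-intervals $[t_1,t_2]$ must be pairwise non-overlapping across all edges (a conflict in the sense defined before Theorem~\ref{thm:star_exp_hard} would make a single walk impossible). The key quantitative point is that the visit interval of $e \in F_t$ cannot be too short: because $\min\tau(e) \le t \le \max\tau(e)$, any visit $(t_1,t_2)$ that is to be ``used around time $t$'' — more carefully, any visit of $e$ at all, once we observe all of $\tau(e)$ lies in $W$ — uses two distinct elements of $\tau(e)$, which are at least $\ell$ apart, so $t_2 - t_1 \ge \ell$, i.e. the closed integer interval $[t_1,t_2]$ contains at least $\ell+1$ integers. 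Actually the cleanest route is: the $w$ visit intervals are pairwise non-overlapping subintervals, each of integer-length at least $\ell$ (hence containing $\ge \ell + 1$ lattice points, but overlapping only at endpoints is still disallowed for strict walks), and all of them are forced to lie inside a bounded window derived from $W$; packing $w$ such intervals into a window of length $2(k-1)u$ forces $w(\ell+1) \le 2(k-1)u + 1$ roughly, giving $w \le (2(k-1)u+1)/(\ell+1)$.

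The step I expect to be the main obstacle is pinning down \emph{why} the $w$ visit intervals of the edges in $F_t$ must all be squeezed into an interval of length $2(k-1)u$ (or $2(k-1)u + 1$ lattice points). The subtlety is that a visit $(t_1,t_2)$ of $e$ need not straddle $t$; a priori $t_1$ and $t_2$ could both lie on the same side of $t$. I would handle this by noting that $\tau(e) \subseteq [\,t-(k-1)u,\ t+(k-1)u\,]$ for \emph{every} $e\in F_t$ (since $e$ has an appearance $\le t$ and an appearance $\ge t$, and spread $\le (k-1)u$), so regardless of which two appearances of $e$ the exploration uses, the whole visit interval $[t_1,t_2]$ lies inside $[\,t-(k-1)u,\ t+(k-1)u\,]$. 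Thus all $w$ visit intervals live in this common window of $2(k-1)u+1$ integers, are pairwise disjoint (as visits with no conflict), and each spans at least $\ell$ steps; a short counting/packing argument then yields $w \le (2(k-1)u+1)/(\ell+1)$, completing the contrapositive. Care is needed with the off-by-one in ``$\ell$ steps vs. $\ell+1$ lattice points'' and in whether disjointness is open or closed, but the bound as stated falls out once the window-containment claim is established.
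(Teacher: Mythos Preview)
Your proposal is correct and follows essentially the same approach as the paper: fix a time $t$ realizing the width, bound the span of the common window containing all visits of edges in $F_t$ by $2(k-1)u$ (giving $2(k-1)u+1$ integer times), observe that each visit occupies at least $\ell+1$ of those times and that distinct visits occupy disjoint sets of times, and conclude by packing. The only cosmetic difference is that the paper works with the tight window $[m,M]$ where $m$ and $M$ are the earliest and latest appearances among edges of $F_t$, whereas you use the symmetric window $[t-(k-1)u,\,t+(k-1)u]$; both yield the same bound.
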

\begin{proof}
Let $\Lambda$ be the lifetime of $(S_n, \tau)$, let $(F_t)_{t \in [\Lambda]}$ be the interval-membership  sequence of $(S_n, \tau)$ and choose any $n \in [\Lambda]$ such that $\lvert F_n\rvert = \imw(S_n, \tau)$. Let $m$ and $M$ be respectively the earliest and latest times at which there are edges in $F_n$ which are active and chose representatives $e_m$ and $e_M$ in $F_n$ such that $m = \min \tau(e_m)$ and $M = \max \tau(e_M)$. 

Recall that visiting any edge $e$ in $S_n$ requires us to us pick two appearances (which differ by at least $\ell +1$ time-steps) of $e$ (one appearance to go along $e$ from the center of $S_n$ to the leaf and another appearance to return to the center of the star). Thus, whenever we specify how to visit an edge $e$ of $F_n$, we remove at least $\ell + 1$ time-steps from the available time-set $\{m, \ldots, M\}$ at which any other edge in $F_n$ can be visited (we need one time-step to travel to the leaf incident with $e$ and then - after $\ell$ time-steps -  we return to the center). Furthermore, since any exploration of $(S_n, \tau)$ must explore all of the edges in $F_n$, for $(S_n, \tau)$ to be explorable, we must have $\lvert F_n\rvert(\ell+1) \leq M - m + 1$. This concludes the proof since $\imw(S_n, \tau) = \lvert F_n\rvert$ and $M - m \leq \lvert \max \tau(e_M) - \min \tau(e_M)\rvert + \lvert \max \tau(e_m) - \min \tau(e_m)\rvert$ (since, by the definition of $F_n$, $n$ is in the intervals of any two elements of $F_n$) which is at most $2(k-1)u +1$ (since consecutive times at any edge differ by at most $u$).
\end{proof}

Notice that nearly-identical arguments yield the following slightly weaker result with respect to the $\tempeuler(k)$ problem.

\begin{lemma}\label{lemma:temp_euler_necessary_cond}
Let $(G, \tau)$ be a $\tempeuler(k)$ instance such that every two consecutive times at any edge differ at most by $u$. If $(G, \tau)$ is temporally Eulerian, then $\imw(G, \tau) \leq 2(k-1)u +1$.
\end{lemma}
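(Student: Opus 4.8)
The plan is to mimic the proof of Lemma~\ref{lemma:clique-number} almost verbatim, keeping track of where the star-specific factor of two in the denominator came from and why it disappears in the Eulerian setting. First I would set up the same notation: let $\Lambda$ be the lifetime of $(G,\tau)$, let $(F_t)_{t\in[\Lambda]}$ be the interval-membership sequence, pick a time $t^\star$ with $\lvert F_{t^\star}\rvert = \imw(G,\tau)$, and let $m$ and $M$ be the earliest and latest times at which some edge of $F_{t^\star}$ is active, realized by representatives $e_m, e_M \in F_{t^\star}$ with $m = \min\tau(e_m)$ and $M = \max\tau(e_M)$. By the definition of $F_{t^\star}$, the time $t^\star$ lies in the active interval of every edge of $F_{t^\star}$, so in particular $m \le t^\star \le M$ and the bound $M - m \le \lvert\max\tau(e_M)-\min\tau(e_M)\rvert + \lvert\max\tau(e_m)-\min\tau(e_m)\rvert \le 2(k-1)u$ follows exactly as before from the fact that consecutive times at any edge differ by at most $u$ and each edge has at most $k$ appearances.

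Next I would argue the counting step, which is the place where the Eulerian setting differs from the star setting. In a temporal Eulerian circuit every edge of $G$ is traversed exactly once, so in particular every edge $e \in F_{t^\star}$ is traversed at exactly one time-edge $(e,t_e)$ with $t_e \in \tau(e)$; moreover, since $e$ is active throughout the window and the circuit traverses it at one of its appearance times, we have $m \le t_e \le M$ for each such $e$ (this needs a short justification: an edge of $F_{t^\star}$ could in principle be traversed outside $[m,M]$, but its appearance times outside $[m,M]$ still must be accounted for — actually the cleanest route is to observe that the circuit traverses $e$ at some time in $\tau(e) \subseteq [\min\tau(e),\max\tau(e)]$, and since $e \in F_{t^\star}$ we need the traversal time to be consistent with visiting all the other edges of $F_{t^\star}$, which constrains things to a common window). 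The key point: unlike a star exploration, here visiting an edge $e$ costs only \emph{one} time-step rather than forcing a round trip that burns $\ell+1$ steps; but the time-edges used for distinct edges of $F_{t^\star}$ must still occur at distinct times because the circuit is a strict temporal walk (times strictly increasing). Hence we get $\lvert F_{t^\star}\rvert$ distinct integer times all lying in $\{m,\dots,M\}$, giving $\lvert F_{t^\star}\rvert \le M - m + 1 \le 2(k-1)u + 1$, which is exactly the claimed inequality since $\imw(G,\tau) = \lvert F_{t^\star}\rvert$.

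The main obstacle I anticipate is pinning down rigorously that all $\lvert F_{t^\star}\rvert$ traversal-times actually fall inside $[m,M]$ (or more precisely inside a window of length $\le 2(k-1)u$). The lemma statement says ``nearly-identical arguments,'' so presumably the intended argument is: each $e \in F_{t^\star}$ is traversed at one of its own appearance times, which lies in $[\min\tau(e),\max\tau(e)]$; because $t^\star \in [\min\tau(e),\max\tau(e)]$ for all $e \in F_{t^\star}$, the intervals $[\min\tau(e),\max\tau(e)]$ pairwise intersect, and one then bounds the span of their union. Actually the span of a family of pairwise-intersecting intervals is at most twice the maximum individual length, i.e.\ $\le 2(k-1)u$, and all traversal times lie in this union. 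Combined with distinctness of the traversal times (strictness of the walk), this yields $\imw(G,\tau) \le 2(k-1)u + 1$. I would present it in exactly this order: notation and the pairwise-intersection observation; the span bound on the union of the intervals; distinctness of traversal times via strictness; and the final pigeonhole-style inequality.
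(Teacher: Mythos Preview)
Your proposal is correct and follows exactly the approach the paper intends (the paper gives no separate proof, merely saying ``nearly-identical arguments'' to Lemma~\ref{lemma:clique-number}). Your anticipated ``main obstacle'' is a non-issue: by your own definition $m$ and $M$ are the earliest and latest appearance times over \emph{all} edges of $F_{t^\star}$, so every appearance time of every $e\in F_{t^\star}$---and hence its traversal time in the circuit---automatically lies in $[m,M]$; the pairwise-intersection detour is unnecessary.
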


The reason that the we can only bound $\imw(G, \tau)$ above by $2(k-1)u +1$ (rather than $(2(k-1)u +1)/(\ell + 1)$ as in the $\starexp(k)$ case of Lemma \ref{lemma:clique-number}) is that temporal Euler circuits only visit each edge once (so exploring each edge only removes exactly one available time).

Lemma \ref{lemma:clique-number} allows us to employ a `win-win' approach for $\starexp(k)$ when we know the maximum difference between consecutive times at any edge: either the considered instance does not satisfy the conditions of Lemma \ref{lemma:clique-number} (in which case we have a no-instance) or the interval-membership-width is small enough for us to usefully apply Corollary~\ref{corollary:starExp_algorithm}. These ideas allow us to conclude the following result. We point out that in the following Theorems~\ref{theorem:even_explorability_algorithm}, and \ref{thm:temp_Euler_fpt_by_k_and_u} Corollary~\ref{corollary:even_star_algorithm}, we can drop the factor $n$ in the running times if we assume that the relevant interval-membership sequences are given. 

\begin{theorem}\label{theorem:even_explorability_algorithm}
Let $(S_n, \tau)$ be a temporal star with at most $k$ times at any edge and such that every two consecutive times at any edge differ at least by $\ell$ and at most by $u$. There is an algorithm that decides whether $(S_n, \tau)$ is explorable in time $2^{\mathcal{O}(ku/\ell)} \Lambda$ where $\Lambda$ is the lifetime of the input.
\end{theorem}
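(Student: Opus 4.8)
The plan is to run the win-win strategy announced at the start of this section: combine the necessary condition for explorability supplied by Lemma~\ref{lemma:clique-number} with the fixed-parameter algorithm of Corollary~\ref{corollary:starExp_algorithm}. Concretely, given $(S_n,\tau)$ I would first use the procedure of Lemma~\ref{lemma:construction_algorithm} to build the interval-membership sequence $(F_t)_{t\in[\Lambda]}$, maintaining a running maximum of $\lvert F_t\rvert$; this yields $w := \imw(S_n,\tau)$. I would then test whether $w \le (2(k-1)u+1)/(\ell+1)$. By the contrapositive of Lemma~\ref{lemma:clique-number}, if this inequality fails then $(S_n,\tau)$ is not explorable and the algorithm can safely answer ``no''. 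Otherwise $w = \mathcal{O}(ku/\ell)$, and I would simply invoke the algorithm of Corollary~\ref{corollary:starExp_algorithm} on $(S_n,\tau)$ to decide explorability; its running time $\mathcal{O}(w^3 2^{3w}\Lambda)$ becomes $2^{\mathcal{O}(ku/\ell)}\Lambda$ upon substituting $w = \mathcal{O}(ku/\ell)$, since the polynomial factor $w^3$ is absorbed by the exponential term.

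Two small points of care remain, both about the running-time accounting. First, on no-instances the threshold $(2(k-1)u+1)/(\ell+1)$ may be tiny while the true width is enormous, so the construction of the interval-membership sequence should abort as soon as some $\lvert F_t\rvert$ exceeds this threshold; this keeps the ``no''-branch within $2^{\mathcal{O}(ku/\ell)}\Lambda$. Second, the stated bound has no explicit dependence on $n$, so I need the dependence on the number of edges to be linear in $\Lambda$ once the width is known to be small: in the surviving branch, $n \le \sum_{e\in E(S_n)} \lvert\tau(e)\rvert \le \sum_{t\in[\Lambda]} \lvert F_t\rvert \le w\Lambda = \mathcal{O}((ku/\ell)\Lambda)$ (using $\tau(e)\subseteq\{t : e\in F_t\}$ and that every edge is active at least once), so reading the input, building the sequence, and the polynomial-time reduction of Lemma~\ref{lemma:Euler_reduction} invoked inside Corollary~\ref{corollary:starExp_algorithm} are all absorbed into $2^{\mathcal{O}(ku/\ell)}\Lambda$.

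I do not anticipate a genuine obstacle: the theorem is essentially a packaging of Lemma~\ref{lemma:clique-number} and Corollary~\ref{corollary:starExp_algorithm}, and the only nontrivial part is precisely the bookkeeping described above — in particular making the early-abort explicit and verifying the $n \le w\Lambda$ estimate so that the $n$-dependence of the preprocessing disappears into the claimed bound.
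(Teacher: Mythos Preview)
Your proposal is correct and follows essentially the same win-win approach as the paper: compute $\imw(S_n,\tau)$, reject by Lemma~\ref{lemma:clique-number} if it exceeds $(2(k-1)u+1)/(\ell+1)$, and otherwise invoke Corollary~\ref{corollary:starExp_algorithm}. If anything, your treatment is more careful than the paper's own proof, which does not spell out the early-abort or the $n \le w\Lambda$ estimate and instead remarks just before the theorem that the factor $n$ can be dropped if the interval-membership sequence is assumed given.
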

\begin{proof}
Determine $\imw(S_n, \tau)$ (using the algorithm of Lemma~\ref{lemma:construction_algorithm_VERTEX_version}); if $\imw(S_n, \tau) > (2(k-1)u +1)/(\ell + 1)$, then $(S_n, \tau)$ is not explorable by Lemma \ref{lemma:clique-number}. Otherwise run the algorithm given in Corollary \ref{corollary:starExp_algorithm}. 
In this case, since $w := \imw(S_n, \tau) \leq (2(k-1)u +1)/(\ell + 1)$, we know that the algorithm of Corollary \ref{corollary:starExp_algorithm} will run on $(S_n,\tau)$ in time $2^{\mathcal{O}(ku/\ell)} \Lambda$. 
\end{proof}

Once again arguing by bidimensionality (this time using Lemma \ref{lemma:temp_euler_necessary_cond} and Theorem \ref{thm:temp_Euler_dyn_prog}) we can deduce the following fixed-parameter tractability result for $\tempeuler$.

\begin{theorem}\label{thm:temp_Euler_fpt_by_k_and_u}
Let $(G, \tau)$ be a $\tempeuler(k)$ instance such that every two consecutive times at any edge differ at most by $u$. There is an algorithm that decides whether $(G, \tau)$ is temporally Eulerian in time $2^{\mathcal{O}(ku)} \Lambda$ where $\Lambda$ is the lifetime of the input.
\end{theorem}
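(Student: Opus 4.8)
The plan is to combine the necessary condition of Lemma~\ref{lemma:temp_euler_necessary_cond} with the \fpt-algorithm of Theorem~\ref{thm:temp_Euler_dyn_prog} in exactly the same ``win--win'' fashion as in Theorem~\ref{theorem:even_explorability_algorithm}, only now there is no lower bound $\ell$ to exploit. First I would compute the interval-membership sequence of $(G,\tau)$ using the algorithm of Lemma~\ref{lemma:construction_algorithm}, which runs in time $\mathcal{O}(\imw(G,\tau)\,\Lambda)$, and along the way read off the value $w := \imw(G,\tau)$.

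Next comes the case distinction. If $w > 2(k-1)u+1$, then by Lemma~\ref{lemma:temp_euler_necessary_cond} the instance $(G,\tau)$ is not temporally Eulerian, so we may reject immediately. Otherwise $w \le 2(k-1)u+1 = \mathcal{O}(ku)$, and we simply invoke the dynamic-programming algorithm of Theorem~\ref{thm:temp_Euler_dyn_prog}, which decides whether $(G,\tau)$ is temporally Eulerian in time $\mathcal{O}(w^3 2^w \Lambda)$. Substituting the bound $w = \mathcal{O}(ku)$ into this running time gives $\mathcal{O}\bigl((ku)^3 2^{\mathcal{O}(ku)} \Lambda\bigr) = 2^{\mathcal{O}(ku)} \Lambda$, since the polynomial factor $(ku)^3$ is absorbed into the exponential. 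The preprocessing step of Lemma~\ref{lemma:construction_algorithm} contributes only $\mathcal{O}(w\Lambda) = 2^{\mathcal{O}(ku)}\Lambda$, so the overall running time is $2^{\mathcal{O}(ku)}\Lambda$ as claimed.

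Correctness is immediate: in the first branch we correctly reject by the contrapositive of Lemma~\ref{lemma:temp_euler_necessary_cond}, and in the second branch we defer to the already-proven correctness of Theorem~\ref{thm:temp_Euler_dyn_prog}. There is essentially no obstacle here — the heavy lifting has been done in Lemma~\ref{lemma:temp_euler_necessary_cond} (the bidimensionality bound) and in Theorem~\ref{thm:temp_Euler_dyn_prog} (the parameterized algorithm). The only point that requires a word of care is noting why the absence of a lower bound $\ell$ forces the weaker bound $2(k-1)u+1$ rather than the sharper $(2(k-1)u+1)/(\ell+1)$ of Lemma~\ref{lemma:clique-number}: a temporal Euler circuit traverses each edge exactly once, so committing to the traversal time of an edge in the critical bag $F_t$ removes only one usable time-step rather than $\ell+1$ of them. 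This is exactly the remark already made following Lemma~\ref{lemma:temp_euler_necessary_cond}, and it explains why the exponent of the running time is $\mathcal{O}(ku)$ rather than $\mathcal{O}(ku/\ell)$.

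\begin{proof}
Given the input $(G,\tau)$ with lifetime $\Lambda$, first compute its interval-membership sequence $(F_t)_{t\in[\Lambda]}$ using the algorithm of Lemma~\ref{lemma:construction_algorithm}; this takes time $\mathcal{O}(w\Lambda)$, where $w := \imw(G,\tau) = \max_{t}\lvert F_t\rvert$. If $w > 2(k-1)u+1$, then $(G,\tau)$ is not temporally Eulerian by Lemma~\ref{lemma:temp_euler_necessary_cond}, so we return ``no''. Otherwise $w \le 2(k-1)u+1$, and we run the algorithm of Theorem~\ref{thm:temp_Euler_dyn_prog} on $(G,\tau)$ to decide correctly whether $(G,\tau)$ is temporally Eulerian; by that theorem this takes time $\mathcal{O}(w^3 2^w \Lambda)$. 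Since in this branch $w = \mathcal{O}(ku)$, we have $\mathcal{O}(w^3 2^w \Lambda) = 2^{\mathcal{O}(ku)}\Lambda$, and the preprocessing cost $\mathcal{O}(w\Lambda)$ is likewise $2^{\mathcal{O}(ku)}\Lambda$. Hence the whole procedure is correct and runs in time $2^{\mathcal{O}(ku)}\Lambda$.
\end{proof}
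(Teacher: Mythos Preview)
Your proposal is correct and follows exactly the approach the paper takes: the paper's own ``proof'' is a single sentence pointing to the same win--win combination of Lemma~\ref{lemma:temp_euler_necessary_cond} and Theorem~\ref{thm:temp_Euler_dyn_prog}, so your write-up is in fact more detailed than the original. One tiny technical wrinkle worth noting (which the paper also glosses over): in the reject branch you have already spent $\mathcal{O}(w\Lambda)$ time on Lemma~\ref{lemma:construction_algorithm} before discovering that $w>2(k-1)u+1$, and this $w$ is not yet bounded in terms of $k,u$; to make the stated bound airtight, simply abort the construction of the interval-membership sequence as soon as some bag exceeds size $2(k-1)u+1$.
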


As a special case of Theorem \ref{theorem:even_explorability_algorithm} (i.e. the case where $\ell = u$) we resolve an open problem of Akrida, Mertzios and Spirakis \cite{starexp} which asked to determine the complexity of exploring $\starexp(k)$-instances with evenly-spaced times. In particular we show that the problem of deciding the explorability of such evenly-spaced $\starexp(k)$-instances is in $\fpt$ when parameterized by $k$.

\begin{corollary}\label{corollary:even_star_algorithm}
There is an algorithm that, given any $\starexp(k)$ instance $(S_n, \tau)$ with lifetime $\Lambda$ such that every two pairs of consecutive times assigned to any edge in $(S_n, \tau)$ differ by the same amount, decides whether $(S_n, \tau)$ is explorable in time $2^{\mathcal{O}(k)} \Lambda$.
\end{corollary}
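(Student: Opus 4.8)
The plan is to obtain this as the special case $\ell = u$ of Theorem~\ref{theorem:even_explorability_algorithm}. The first step is to read off from the hypothesis that there is a single value $\lambda \in \mathbb{N}$ — computable in time linear in the input by inspecting any one edge with at least two times — such that every two consecutive times assigned to \emph{any} edge of $(S_n, \tau)$ differ by exactly $\lambda$ (if every edge has at most one time the instance is trivially explorable, so we may assume some edge has two). In the language of Theorem~\ref{theorem:even_explorability_algorithm} and Lemma~\ref{lemma:clique-number} this is precisely the situation $\ell = u = \lambda$: every two consecutive times at any edge differ at least by $\lambda$ and at most by $\lambda$.

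With this observation in hand, I would just run the algorithm of Theorem~\ref{theorem:even_explorability_algorithm} with $\ell = u = \lambda$ and trace through its running-time bound $2^{\mathcal{O}(ku/\ell)}\Lambda$; since $u/\ell = 1$ this is $2^{\mathcal{O}(k)}\Lambda$, as required. Concretely, the win-win step underneath now reads: by Lemma~\ref{lemma:clique-number}, if $(S_n, \tau)$ is explorable then $\imw(S_n, \tau) \leq (2(k-1)u + 1)/(\ell + 1) = (2(k-1)\lambda + 1)/(\lambda + 1) \leq 2(k-1)$. So the algorithm first computes the interval-membership sequence of $(S_n,\tau)$ using Lemma~\ref{lemma:construction_algorithm}; if the resulting width exceeds $2(k-1)$ it reports a no-instance, and otherwise it runs the $\fpt$-algorithm of Corollary~\ref{corollary:starExp_algorithm}, which on an instance of interval-membership-width $w = \mathcal{O}(k)$ takes time $\mathcal{O}(w^3 2^{3w} \Lambda) = 2^{\mathcal{O}(k)}\Lambda$. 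Dropping the factor $n$ when the interval-membership sequence is already provided is immediate, exactly as remarked before Theorem~\ref{theorem:even_explorability_algorithm}.

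The only point requiring any care — and it is the whole content of the corollary — is the cancellation in the bound from Lemma~\ref{lemma:clique-number}: when $\ell = u = \lambda$ the numerator $2(k-1)\lambda + 1$ and the denominator $\lambda + 1$ both grow linearly in $\lambda$, so their ratio is bounded by $2(k-1)$ irrespective of how large the global spacing $\lambda$ is. This is the phenomenon flagged in the discussion preceding Lemma~\ref{lemma:clique-number} (the dependence on the spacing disappears precisely when $\ell = u$), and there is no genuine obstacle beyond verifying this arithmetic; the statement follows as a direct corollary of Theorem~\ref{theorem:even_explorability_algorithm}.
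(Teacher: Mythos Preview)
Your proposal is correct and follows exactly the paper's approach: the paper simply presents this corollary as the special case $\ell = u$ of Theorem~\ref{theorem:even_explorability_algorithm}, and your write-up spells out the arithmetic of that specialization in more detail than the paper itself does. One small slip worth fixing: if every edge has at most one time the instance is trivially a \emph{no}-instance (you cannot visit a leaf and return on a single appearance), not ``trivially explorable''; this does not affect the main argument.
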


\section{A vertex version of interval-membership-width}\label{sec:vertex_version}

It is natural to ask whether the parameter interval-membership-width also allows the design of $\fpt$ algorithms for problems involving vertex exploration or reachablity. 

Unfortunately, it turns out that restricting the interval-membership-width is not sufficient to guarantee tractability for certain natural problems. As an example, we show that $\minreachdelete$ (a problem concerning the deletion of edges to reduce the size of the largest reachability set in a temporal graph) remains hard even on instances with unit interval-membership-width (Theorem~\ref{thm:reachability_hardness}). These observations thus motivate the introduction of a vertex variant of our parameter -- called \emph{vertex}-interval-membership-width -- which proves to be more useful for such vertex-reachability problems. Indeed, in we show that parameterizing by the vertex-variant of interval-membership-width puts the aforementioned $\minreachdelete$ in $\fpt$ (Theorem~\ref{thm:reachability-fpt}). 

\subsection{Hardness of $\minreachdelete$}
A vertex $u$ is said to be \emph{temporally reachable} from $u$ in the temporal graph $(G,\tau)$ if there exists a temporal path from $v$ to $u$; every vertex is assumed to be temporally reachable from itself.  The \emph{temporal reachability set} of vertex $v$ in $(G,\tau)$, written $\reach_{G,\tau}(v)$, is then defined to be the set of vertices which are temporally reachable from $u$; the temporal reachability set of a set $S \subseteq V(G)$, written $\reach_{G,\tau}(S)$, is defined in the natural way to be $\bigcup_{v \in S} \reach_{G,\tau}(v)$.  The \emph{temporal reachability} of a set of vertices $S \subseteq V(G)$ is $\lvert \reach_{G,\tau}(S)\rvert$ (for a vertex $v$ in $(G,\tau)$ we write $\lvert \reach_{G,\tau}(v)\rvert$ rather than $\lvert \reach_{G,\tau}(\{v\})\rvert$).  We consider the following problem, introduced in \cite{Molter21DeleteDelay}. 

\begin{framed}
\noindent
\textbf{$\minreachdelete$}\\
\textbf{Input:} A temporal graph $(G,\tau)$, a set of source vertices $S$, and $k, h \in \mathbb{N}$.\\
\textbf{Question:} Is there a set $E'$ of time-edges, with $\lvert E'\rvert  \le k$, such that the temporal reachability of $S$ in $(G,\tau) \setminus E'$ is at most $h$?
\end{framed}

Note that this is a generalisation of the problem $\typesetproblem{TR-EdgeDeletion}$ introduced in \cite{TempEdgeDel}, where the set $S$ of sources is always taken to be equal to $V(G)$.  Here we adapt one of the arguments used to demonstrate intractability of $\typesetproblem{TR Edge Deletion}$ \cite[Theorem~3.1]{TempEdgeDel} to show that $\minreachdelete$ is para-NP-hard with respect to the interval-membership-width of the input graph.  

\begin{theorem}\label{thm:reachability_hardness}
$\minreachdelete$ is NP-hard, even if the input temporal graph has interval-membership-width one.
\end{theorem}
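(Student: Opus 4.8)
The plan is a polynomial-time reduction from a covering problem; Set Cover is convenient. Given a universe $U=\{1,\dots,n\}$, a family $\mathcal{A}=\{A_1,\dots,A_m\}$ of subsets of $U$, and $k\in\mathbb{N}$, I would build a temporal graph $(G,\tau)$, a single source $s$ (so the source set is $S=\{s\}$), and integers $k'$ and $h$, so that $(G,\tau)$ with sources $S$ and parameters $k',h$ is a yes-instance of $\minreachdelete$ if and only if some $\le k$ members of $\mathcal{A}$ cover $U$. The building blocks are: a distinguished ``switch'' edge $g_j$ for each set $A_j$; a ``terminal'' vertex $T_i$ for each element $i$, fitted with $N$ pendant leaves for a suitably large $N$; and a (carefully designed) acyclic ``plumbing'' that connects $s$ to the terminals through the switches so that $T_i$ is temporally reachable from $s$ exactly when none of the switches $\{g_j : i\in A_j\}$ has been deleted. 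Every edge except the switches is made un-deletable on the available budget --- e.g.\ by replacing it with a bundle of $k'+1$ internally disjoint copies, or by exploiting that deleting a \emph{time}-edge only kills one appearance of an edge and giving it $k'+1$ times. Taking $N$ larger than the number of non-leaf vertices and $h$ just below $N$ forces ``$\lvert\reach_{G,\tau}(s)\rvert\le h$'' to be equivalent to ``every $T_i$ is unreachable from $s$'', and by the wiring this is equivalent to ``the set of deleted switches is a set cover''; one finally picks $k'$ so that a feasible deletion has size $\le k'$ exactly when the corresponding cover has size $\le k$.

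To obtain interval-membership-width one I would not time the acyclic graph directly (an undirected temporal edge can be traversed in either time-direction, so a raw DAG need not have temporal reachability equal to directed reachability); instead, replace each intended directed edge $u\to v$ by a two-edge ``one-way gadget'' $u-m_{uv}-v$, and assign all gadget-edges pairwise distinct times in an order consistent with a topological order of the line digraph. Then within a gadget the two times are consecutive, so the reverse traversal would require decreasing times and is forbidden; consequently $\imw(G,\tau)=1$ (each $F_t$ contains at most one edge), and the temporal reachability of $s$ equals the directed reachability of $s$ in the underlying DAG --- and this remains true after deleting any set of time-edges, since deleting either edge of a gadget simply removes the directed edge it simulates. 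Thus the whole correctness analysis can be carried out statically on a DAG.

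The two implications are then routine. If $\mathcal{C}\subseteq\mathcal{A}$ is a cover of size $\le k$, deleting the switches $\{g_j : A_j\in\mathcal{C}\}$ (at most $k'$ time-edges) destroys, for every $i$, a switch lying on every $s$--$T_i$ route, so all terminals become unreachable and $\lvert\reach_{G,\tau}(s)\rvert\le h$. Conversely, given any deletion set $E'$ of at most $k'$ time-edges with $\lvert\reach_{G,\tau}(s)\rvert\le h$: no un-deletable edge can have been killed within the budget, and one reachable terminal alone contributes more than $h$ vertices, so every $T_i$ is unreachable; hence $E'$ must sever, for each $i$, a switch of some $A_j$ containing $i$, and these switches form a cover of size $\le k$.

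The step I expect to be genuinely hard is constructing the ``plumbing'' so that the invariant ``$T_i$ reachable $\iff$ no switch of a set containing $i$ is deleted'' holds exactly. The tension is that a single switch $g_j$ must lie on \emph{every} $s$--$T_i$ route for \emph{every} $i\in A_j$ --- so that one deletion can mirror one set covering several elements --- while the overlapping routes for different elements must not create a detour reaching some $T_i$ that avoids one of its switches, and the element-private portions of these routes must not offer a cheaper way to disconnect many terminals at once without deleting switches. Balancing these competing requirements, while keeping the construction acyclic so that the width-one time-labelling via one-way gadgets is available, is the crux of the proof; once such a gadget is in hand, the reachability count and the equivalence with Set Cover go through as sketched above, and membership in $\np$ is immediate with $E'$ as certificate.
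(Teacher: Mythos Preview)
Your proposal has a genuine gap: the ``plumbing'' you sketch cannot exist for general Set Cover (or even Vertex Cover) instances --- it fails whenever two sets properly overlap. Suppose $A_{j_1}$ and $A_{j_2}$ admit elements $a \in A_{j_1} \setminus A_{j_2}$, $b \in A_{j_2} \setminus A_{j_1}$ and $c \in A_{j_1} \cap A_{j_2}$. Your invariant forces $g_{j_1}$ and $g_{j_2}$ both to be $s$--$T_c$ cut edges in the underlying DAG, while $g_{j_1}$ is \emph{not} a cut for $T_b$ and $g_{j_2}$ is \emph{not} a cut for $T_a$. Fix any $s$--$T_c$ path $P$; it traverses both $g_{j_1}$ and $g_{j_2}$, say $g_{j_1}$ first (the other case is symmetric), so the suffix of $P$ from the head of $g_{j_2}$ reaches $T_c$ without using $g_{j_1}$. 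Since $g_{j_2}$ is a cut for $T_b$ but $g_{j_1}$ is not, some $s$--$T_b$ path uses $g_{j_2}$ and avoids $g_{j_1}$; its prefix is an $s$--$\mathrm{head}(g_{j_2})$ path avoiding $g_{j_1}$. Concatenating this prefix with the suffix of $P$ gives, in a DAG (where every directed walk is a simple path), an $s$--$T_c$ path avoiding $g_{j_1}$ --- contradicting that $g_{j_1}$ is a cut for $T_c$. Hence your wiring is realisable only when the set family is laminar, and Set Cover on laminar families is polynomial. Replacing each $g_j$ by a bundle of parallel edges does not rescue the approach either: it destroys the one-to-one correspondence between deleted switches and chosen sets on which your budget argument $k'=k$ relies.

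The paper sidesteps this obstruction by using \emph{parallel} rather than series routing, and reducing from \textsc{Clique} instead. Each ``edge'' vertex $e_j = v_i v_\ell$ of the \textsc{Clique} instance is reached from $s$ along exactly two routes, one through $v_i$ and one through $v_\ell$; deleting $\{sv : v \in U\}$ therefore disconnects precisely the $e_j$ with both endpoints in $U$, and the threshold $h$ is tuned so that disconnecting $\binom{r}{2}$ such vertices with $r$ deletions forces $U$ to induce a clique. Two-route parallel wiring is trivial to lay out as a DAG with every edge active at a single, globally distinct time, which immediately gives interval-membership-width one.
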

\begin{proof}
We will prove that the problem remains NP-hard even when the input temporal graph satisfies the following two properties:
\begin{enumerate}
\item every edge is active at exactly one time, and
\item no two edges are active simultaneously.
\end{enumerate}
These two conditions together immediately imply that the graph has interval-membership-width one.  Hardness of $\typesetproblem{TR Edge Deletion}$, and hence $\minreachdelete$, when every edge appears exactly once, was already demonstrated in \cite{TempEdgeDel}; however, in this construction, an unbounded number of edges is active at the same time (giving unbounded interval-membership-width).  Here we adapt the construction so that no two edges are active at the same time.  Note that, as each edge is active at exactly one time in our construction, we can use the terms edge and time-edge interchangeably in the proof.

As in \cite[Theorem~3.1]{TempEdgeDel}, the reduction is from the NP-hard problem \textsc{Clique}.  Let $(G,r)$ be the input to an instance of \textsc{Clique}, where $\lvert V(G)\rvert = \{v_1,\dots,v_n\}$ and $\lvert E(G)\rvert = \{e_1,\dots,e_m\}$.  We will construct an instance $((H,\tau),k,h)$ of \textsc{TR Edge Deletion} which is a yes-instance if and only if $(G,r)$ is a yes-instance for \textsc{Clique}.  As in the proof of \cite[Theorem~3.1]{TempEdgeDel}, we assume without loss of generality that $m > r + \binom{r}{2}$.

We begin by defining $H$.  The vertex set of $H$ is $V(H) = \{s\} \cup V(G) \cup E(G) \cup W$, where $W := \{w_{i,j}: 1 \le i \le n, 1 \le j \le m, v_i \in e_j\}$.  
The edge set is 
\begin{align*}
E(H) = & \{sv_i : 1 \le i \le n\} \\
	  & \cup \{v_iw_{i,j}: 1 \le i \le n, 1 \le j \le m, v_i \in e_j\} \\
	  & \cup \{w_{i,j}e_j : 1 \le i \le n, 1 \le j \le m, v_i \in e_j\} \\
	  & \cup \{sw_{i,j}: 1 \le i \le n, 1 \le j \le m, v_i \in e_j\}.
\end{align*}
We complete the construction of the temporal graph $(H,\tau)$ by setting
\[
\tau(e) = \begin{cases}
				i			& \mbox{if $e = sv_i$ for some $1 \le i \le n$}\\
				n + 2j - 1 	& \mbox{if $e = v_iw_{i,j}$ for some $1 \le i \le n$ and $1 \le j \le m$} \\
				n + 2j		& \mbox{if $e = w_{i,j}e_j$ for some $1 \le i \le n$ and $1 \le j \le m$} \\
				n + 2m + j	& \mbox{if $e = sw_{i,j}$ for some $1 \le i \le n$ and $1 \le j \le n$}.
		  \end{cases}
\]
It is immediate from the construction of $(H,\tau)$ that each edge is active at exactly one time, and that no two edges are active at the same time.  Finally, we set $S = \{s\}$, $k = r$ and $h = 1 + (n-r) + 2m + (m - \binom{r}{2})$.

Suppose first that $U \subseteq V(G)$ is a set of $r$ vertices that induces a clique in $G$.  Set $E' := \{sv: v \in U\}$ and write $(H',\tau')$ for the temporal graph obtained from $(H,\tau)$ by deleting all edges in $E'$.  We will argue that the temporal reachability of $S = \{s\}$ in $(H',\tau')$ is at most $h$, implying that $((H,\tau),S,k,h)$ is a yes-instance for $\minreachdelete$.  Note that $s$ reaches every vertex in $W$ along one-edge paths, but that no further vertices can be reached along paths starting with these edges as each such edge is active strictly later than any other edge incident with the endpoint in $W$.  It follows that every vertex reached by $s$ that does not belong to $W$ must be reached via a an element of $V(G) \setminus U$.  Thus we deduce that 
\[	
	\reach_{H',\tau'}(s) \subseteq \{s\} \cup W \cup V(G) \setminus U \cup \bigcup_{v \in V(G) \setminus U} \{e \in E(G): v \in e\}.
\]
In particular, we see that $s$ does not reach any vertex in $U$, or any element of $E(G)$ with both endpoints in $U$.  Since $U$ induces a clique in $G$, we see that $s$ fails to reach at least $r + \binom{r}{2}$ vertices in $(H',\tau')$ and hence $\lvert \reach_{H',\tau'}(s)\rvert \le 1 + n + m + 2m - r - \binom{r}{2} = h$, as required.

Conversely, suppose that there exists a set $E' \subseteq E(H)$ with $\lvert E'\rvert = r$ such that, if $(H',\tau')$ is the graph obtained from $(H,\tau)$ by deleting all edges in $E'$, we have that the reach of $s$ is $(H', \tau')$ is at most $\lvert \reach_{H',\tau'}(s)\rvert \le h$.  

Suppose first that $E' \cap \{sw: w \in W\} = \emptyset$; we will argue that in this case $s$ reaches at least $\lvert V(H)\rvert - \lvert E'\rvert - \binom{\lvert E'\rvert}{2}$ vertices in $(H',\tau')$ and that this lower bound can only be achieved if $G$ contains a clique on $r$ vertices.  

We begin by arguing that we may assume without loss of generality that every element of $E'$ is incident with $s$.  Suppose first that $v_iw_{i,j} \in E'$ for some $1 \le i \le n$ and $1 \le j \le m$.  The only vertices which are reached from $s$ along a temporal path using this edge are $e_j$ and elements of $W$ which are necessarily in the reachability set of $s$ since we are assuming we do not delete any edge of the form $sw$ with $w \in W$.  We can therefore replace $v_iw_{i,j}$ with $sv_i$ in $E'$ without increasing the number of vertices that are temporally reachable from $s$, since deleting $E'$ will still destroy the temporal path from $s$ to $e_j$ via $v_iw_{i,j}$.  Suppose now that $w_{i,j}e_j \in E'$ for some $1 \le i \le n$ and $1 \le j \le m$.  Again, the only vertex outside $W$ that is reachable from $s$ along a temporal path that includes this edge is $e_j$, and as before we can destroy this temporal path by instead deleting $sv_i$.  We therefore conclude that, provided that $E' \cap \{sw: w \in W\} = \emptyset$, it is possible to delete a subset of $\{sv_i: 1 \le i \le n\}$ of size $\lvert E'\rvert$ such that the reachability set of $s$ is a subset of $\reach_{H',\tau'}(s)$.  We will therefore assume from now on that $E' \subseteq \{sv_i: 1 \le i \le n\}$.

Set $U \subseteq V(G)$ to be the set of vertices in $V(G)$ that are incident with an edge in $E'$.  We claim that $U$ induces a clique in $G$.  To see this, note that $s$ reaches all of $V(G) \setminus U$, all of $W$, and every vertex in $E(G)$ that does not have both endpoints in $U$.  By assumption, we therefore have
\begin{align*}
	h &\ge \lvert \reach_{H',\tau'}(v)\rvert = 1 + \lvert V(G) \setminus U\rvert + \lvert W\rvert + \left\lvert \bigcup_{v \in V(G) \setminus U} \{e \in E(G): v \in e\}\right\rvert \\
	&= 1 + (n-r) + 2m + m - \{xy \in E(G): x,y \in U\}.
\end{align*}
By definition of $h$, it follows that $\lvert \{xy \in E(G): x,y \in U\}\rvert \ge \binom{r}{2}$; this holds if and only if $U$ induces a clique in $G$, in which case we have equality.  It therefore follows, as claimed, that $s$ reaches at least $\lvert V(H)\rvert - \lvert E'\rvert - \binom{\lvert E'\rvert}{2}$ vertices in $(H',\tau')$, and that this lower bound can only be achieved if $G$ contains a clique on $r$ vertices.  

To complete the proof, we argue that the remaining case $E' \cap \{sw: w \in W\} \neq \emptyset$ cannot occur.  Suppose that $\lvert E' \cap \{sw: w \in W\}\rvert = t > 0$.  Note that the deletion of an edge $sw$ with $w \in W$ can at most remove $w$ from the reachability set of $s$, since the edge $sw$ is active strictly later than any other edge incident with $w$ and so cannot be part of a longer temporal path starting at $s$.  If we set $E'' = E' \setminus \{sw: w \in W\}$, this observation combined with the reasoning above tells us that 
\begin{align*}	
	\lvert \reach_{H',\tau'}(s)\rvert &\ge \lvert V(H)\rvert - \lvert E''\rvert - \binom{\lvert E''\rvert}{2} - t \\
	&= \lvert V(H)\rvert - (r-t) - \binom{\lvert E''\rvert}{2} - t \\
	&> \lvert V(H)\rvert - r - \binom{r}{2} = h,
\end{align*}
contradicting our choice of $E'$.
\end{proof}

\subsection{\emph{Vertex}-interval-membership-width}
We now introduce the vertex version of interval-membership-width, which captures the maximum number of vertices incident both an edge active before time $t$ and an edge active after time $t$, taken over all times $t$ in the lifetime of the graph.

\begin{definition}
Let $(G,\tau)$ be a temporal graph with lifetime $\Lambda$. The \emph{vertex interval-membership  sequence} of $(G,\tau)$ is the sequence $(F_t)_{t \in [\Lambda]}$ of vertex-subsets of $G$ (called \emph{bags}) where each $F_t$ is defined as follows
\[
    F_t := \{v \in V(G): \exists i \le t \le j \text{ and } u,w \in V(G) \text{ such that } i \in \tau(uv) \text{ and } j \in \tau(wv)\}
\]
(note that $u$ and $w$ need not be distinct). The \emph{vertex-interval-membership-width} of $(G,\tau)$ -- denoted $\vimw(G,\tau)$ -- is the maximum cardinality attained by any bag in the vertex interval-membership  sequence of $(G,\tau)$ (i.e. $\vimw(G,\tau) := \max_{t \in [\Lambda]} \lvert F_t\rvert$).
\end{definition}

It is clear that, for any temporal graph $(G,\tau)$, $\vimw(G,\tau) \ge 2 \imw(G,\tau)$: if an edge $e$ is active at times before and after $t$, and so belongs to the set for time $t$ in the interval-membership  sequence, it follows that both endpoints of $e$ must belong to the set for time $t$ in the vertex interval-membership  sequence.  However, the difference between the two parameters can be arbitrarily large.  Consider, for example, a disjoint union of two-edge paths $P_1,\dots,P_k$, where the edges of $P_i$ are active at times $i$ and $i+k$ respectively: since every edge appears at a single unique time, the interval-membership-width is only one, but the midpoint of every path is incident with an edge appearing at a time at most $k$ and a time strictly greater than $k$, giving vertex-interval-membership-width at least $k$. 

\begin{lemma}\label{lemma:construction_algorithm_VERTEX_version}
If $(G, \tau)$ is a temporal graph where each edge is active at-least once, then one can compute the vertex-interval-membership sequence of $(G, \tau)$ in time $\mathcal{O}(\vimw(G, \tau)\Lambda)$.
\end{lemma}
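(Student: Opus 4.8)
The plan is to mimic, almost verbatim, the algorithm of Lemma~\ref{lemma:construction_algorithm}, replacing each edge by its pair of endpoints. Concretely, I would initialise a list $(F_t)_{t \in [\Lambda]}$ with $F_t = \emptyset$ for every $t$, and then, for each non-isolated vertex $v$ of $G$, compute the first and last times at which $v$ is incident with an active edge,
\[
    m_v := \min\bigl\{t : t \in \tau(uv)\text{ for some }u\in V(G)\bigr\},
    \qquad
    M_v := \max\bigl\{t : t \in \tau(uv)\text{ for some }u\in V(G)\bigr\},
\]
and insert $v$ into every bag $F_t$ with $m_v \le t \le M_v$. The values $(m_v, M_v)_{v\in V(G)}$ can be produced by a single preliminary scan over the time-edges of $(G,\tau)$ that maintains a running minimum and maximum for each endpoint (or, equivalently, by first recording $\min\tau(e)$ and $\max\tau(e)$ for each edge $e$ and then taking the appropriate minima/maxima over the edges incident with each vertex).

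For correctness, the key observation I would record is that, for every vertex $v$ and every $t\in[\Lambda]$, we have $v\in F_t$ if and only if $m_v \le t \le M_v$. Indeed, if $m_v \le t \le M_v$ then, taking $i := m_v$ and $j := M_v$ together with the edges witnessing these extremal values, the defining condition for membership of $v$ in the $t$-th bag of the vertex interval-membership sequence is satisfied; conversely, if $v\in F_t$, then any witness $i\in\tau(uv)$ with $i\le t$ satisfies $i\ge m_v$ and any witness $j\in\tau(wv)$ with $j\ge t$ satisfies $j\le M_v$, so $m_v\le t\le M_v$. Hence the list produced by the algorithm is exactly the vertex interval-membership sequence.

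For the running time, the dominant cost is populating the bags, and here I would use the same re-summation as in Lemma~\ref{lemma:construction_algorithm}: inserting $v$ into its bags costs $\mathcal{O}(M_v - m_v + 1) = \mathcal{O}(\lvert\{t : v\in F_t\}\rvert)$, so the total insertion cost is $\sum_{v\in V(G)}\lvert\{t : v\in F_t\}\rvert = \sum_{t\in[\Lambda]}\lvert F_t\rvert \le \vimw(G,\tau)\,\Lambda$; the preliminary scan that computes the $m_v$ and $M_v$ is linear in the temporal graph and, exactly as the reading of $\tau(e)$ is dominated in the proof of Lemma~\ref{lemma:construction_algorithm}, is absorbed into this bound. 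I do not expect a genuine obstacle here — the argument is a transcription of the earlier lemma — so the only points that need (minor) care are the reduction of the definition's existential quantifiers over $i,j,u,w$ to the two extremal times $m_v,M_v$, and the standard telescoping that turns the per-vertex insertion cost into $\vimw(G,\tau)\,\Lambda$.
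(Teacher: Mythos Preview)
Your proposal is correct and follows essentially the same approach as the paper: compute, for each vertex, the first and last times it is incident with an active edge (the paper does this explicitly via the intermediate edge extrema $a_e,A_e$ and then $b_x=\min_{e\ni x}a_e$, $B_x=\max_{e\ni x}A_e$, which is precisely the alternative you describe), insert each vertex into its interval of bags, and bound the running time by the double-counting identity $\sum_v\lvert\{t:v\in F_t\}\rvert=\sum_t\lvert F_t\rvert\le\vimw(G,\tau)\,\Lambda$. The only place where the paper is slightly more explicit is in absorbing the preliminary scan: it bounds this by $\mathcal{O}(\imw(G,\tau)\,\Lambda)$ and then invokes $\imw(G,\tau)\le\vimw(G,\tau)$, which is exactly what your phrase ``absorbed into this bound'' amounts to.
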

\begin{proof}
Initialize a sequence $(W_t)_{t \in [\Lambda]}$ of empty bags; then proceed as follows: 
\begin{enumerate}[label=\textbf{V\arabic*}]
    \item For each edge $e$ of $G$, find the pair $(a_e, A_e) = (\min \tau(e), \max \tau(e))$; \label{step:vertex_1}
    \item for each vertex $x$ of $G$, \label{step:vertex_2}
        \begin{enumerate}
            \item let $(b_x, B_x) = (\min_{e \ni x} a_e, \max _{e \ni x} A_e)$
            \item for all times $t$ with $b_x \leq t \leq B_x$, add $x$ to the bag $W_t$.
        \end{enumerate}
\end{enumerate}
Correctness follows trivially from the definition of the vertex-interval-membership sequence, so now we consider the running time. Denoting the interval-membership sequence of $(G, \tau)$ by $(F_t)_{t \in [\Lambda]}$, then, since $2 \imw(G, \tau) \leq \vimw(G, \tau)$ (as we observed earlier), Step~\ref{step:vertex_1} takes time \[\mathcal{O}\Bigl(\sum_{e \in E(G)}\lvert \tau(e)\rvert\Bigr) = \mathcal{O}\Bigl(\sum_{t \in [\Lambda]}\lvert F_t\rvert\Bigr) = \mathcal{O}(\imw(G, \tau)\Lambda ) = \mathcal{O}(\vimw(G, \tau)\Lambda).\]
Step~\ref{step:vertex_2} takes time 
\begin{align*}
    \mathcal{O}\Bigl(&\sum_{x \in V(G)} (d_G(x) + \lvert \{t \in [\Lambda]: x \in W_t\}\rvert)\Bigr) = \\
    &= \mathcal{O}(\lvert E(G)\rvert) + \mathcal{O}\Bigl(\sum_{x \in V(G)} \lvert \{t: x \in W_t\}\rvert)\Bigr) \\
    &= \mathcal{O}\Bigl(\sum_{t \in [\Lambda]} \lvert F_t\rvert)\Bigr) + \mathcal{O}\Bigl(\sum_{x \in V(G)} \lvert \{t: x \in W_t\}\rvert)\Bigr) &(\forall e\in E(G),\:\lvert \tau(e)\rvert\geq1)\\
    &= \mathcal{O}\Bigl(\sum_{t \in [\Lambda]} \lvert F_t\rvert)\Bigr) + \mathcal{O}\Bigl(\sum_{t \in [\Lambda]}\lvert W_t\rvert)\Bigr) \\
    &= \mathcal{O}(\imw(G, \tau)\Lambda ) + \mathcal{O}(\vimw(G, \tau)\Lambda) \\
    &= \mathcal{O}(\vimw(G, \tau)\Lambda) &(\imw(G, \tau) \leq \vimw(G, \tau)).
\end{align*}
Thus the whole algorithm runs in $\mathcal{O}(\vimw(G, \tau)\Lambda)$ time, as desired.
\end{proof}

We now illustrate the greater algorithmic power of vertex-interval-membership-width by showing that $\minreachdelete$ admits an $\fpt$ algorithm with respect to this larger parameter.

\begin{theorem}\label{thm:reachability-fpt}
There is an algorithm that decides whether any temporal graph $(G,\tau)$ with lifetime $\Lambda$ is a yes-instance of $\minreachdelete$ in time $\mathcal{O}(w^2 h 2^{w^2} \Lambda)$, where $w = \vimw(G,\tau)$ is the vertex-interval-membership-width of $(G,\tau)$.
\end{theorem}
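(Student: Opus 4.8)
The plan is to solve $\minreachdelete$ by dynamic programming along the vertex interval-membership sequence $(F_t)_{t \in [\Lambda]}$ of $(G,\tau)$, which we first compute in time $\mathcal{O}(w\Lambda)$ using Lemma~\ref{lemma:construction_algorithm_VERTEX_version}. Two structural facts make this work. First, every time-edge active at time $t$ has \emph{both} endpoints in $F_t$ (each such endpoint is incident with an edge active at a time $\le t$ and with one active at a time $\ge t$, namely that very edge), so at any fixed time there are at most $\binom{w}{2}$ time-edges to reason about. Second, once a vertex $v$ leaves the bag — that is, $v \in F_t \setminus F_{t+1}$ — it is incident with no edge active after time $t$, so whether $v$ is temporally reachable from $S$ is permanently fixed by the edges $\{e : \max\tau(e)\le t\}$ and by the deletions already made among them; in particular it can never become unreachable later. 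Finally, since $E'$ is a set of \emph{time}-edges, the decisions of whether to delete $(e,t)$ for the various $t \in \tau(e)$ are independent, and each such decision is naturally attached to the single time $t$, so no deletion decision is taken twice.

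The table: for each $t \in \{0,1,\dots,\Lambda\}$ I would maintain $D_t$ indexed by pairs $(R,j)$ with $R \subseteq F_t$ and $j \in \{0,1,\dots,h+1\}$, where $D_t(R,j)$ is the minimum size of a set of time-edges active at times $\le t$ whose deletion yields a temporal graph in which the set of vertices of $F_t$ reachable from $S$ via temporal paths through edges active at times $\le t$ is exactly $R$, and $\min\{h+1,\, r\}=j$, where $r$ is the number of reachable vertices not in $F_t$. (Vertices of $V(G)$ incident with no edge are accounted for by a fixed additive constant equal to the number of such vertices lying in $S$; because times strictly increase, a temporal path uses at most one edge at any given time, so there is no propagation of reachability within a single time-step.) The table $D_0$ has the single entry $(\emptyset,0)\mapsto 0$.

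The transition from $D_t$ to $D_{t+1}$ is carried out on each state $(R,j)$ in three sub-steps: \textbf{(a)} add the vertices of $F_{t+1}\setminus F_t$ (whose first incident edge is at time $t+1$) to the bag, marking such a vertex reachable exactly when it lies in $S$; \textbf{(b)} for each of the $\le 2^{\binom{w}{2}}$ subsets $X$ of the time-edges active at time $t+1$ chosen for deletion, form the updated reachable set by adding, for every \emph{kept} edge $uv$, the endpoint $v$ whenever $u$ was reachable at the start of the time-step (and symmetrically), and record the resulting state in $D_{t+1}$ with value increased by $|X|$, keeping the pointwise minimum; \textbf{(c)} remove the vertices of $F_t\setminus F_{t+1}$ from the bag, incrementing $j$ (capped at $h+1$) by the number of those that are reachable. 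Ordering (a),(b),(c) is consistent because a departing vertex has no edge active at time $t+1$. Correctness is proved by induction on $t$, showing each entry equals its intended value: the forward direction takes an arbitrary valid deletion set, restricts it to times $\le t$, and reads off the induced configuration; the backward direction takes an optimal configuration at time $t$ and extends it by one time-step, using the two structural facts above to see that no relevant information has been lost. At the end, after one final departure step, one answers \textbf{yes} iff $D_\Lambda$ has an entry $(R,j)$ with $|R|+j$ plus the additive constant at most $h$ and value at most $k$.

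For the running time there are $\Lambda+1$ tables; each table has at most $2^w(h+2)$ states, each state generates at most $2^{\binom{w}{2}}$ transitions, and $2^w 2^{\binom{w}{2}} \le 2^{w^2}$, while each transition costs $O(w^2)$ to recompute reachability inside the bag and update $j$. Combined with the $\mathcal{O}(w\Lambda)$ preprocessing this gives the claimed $\mathcal{O}(w^2 h\, 2^{w^2}\Lambda)$ bound. I expect the main obstacle to be the correctness argument rather than the algorithm itself: one must argue carefully that the pair $(R,j)$ is a \emph{sufficient} summary of the history — i.e.\ that the minimum number of further deletions required to reach any target configuration depends on the past only through $(R,j)$ — which rests precisely on the irrevocability of a departed vertex's reachability and on the independence of deletion decisions across distinct times, and one must also check the corner cases involving source vertices (always reachable, able to originate a temporal path at any time while in the bag, and always counted towards $h$) and vertices with no incident edges.
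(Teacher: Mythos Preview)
Your proposal is correct and follows essentially the same approach as the paper: a dynamic program over the vertex interval-membership sequence whose state records the reachable subset of the current bag together with a (capped) count of already-reached vertices, with transitions that enumerate all $2^{\binom{w}{2}}$ deletion choices among the time-edges active at the current step. The differences are only in bookkeeping---you track the count $j$ of reachable vertices \emph{outside} the bag whereas the paper tracks a bound $r$ on the total, and you split the transition explicitly into add/process/remove sub-steps and handle isolated sources with an additive constant---but the underlying algorithm, correctness argument, and running-time analysis are the same.
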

\begin{proof}
We proceed by a dynamic programming argument similar to that of Theorem~\ref{thm:temp_Euler_dyn_prog}.  Let $((G,\tau),S,k,h)$ be the input to an instance of $\minreachdelete$ and let $(F_t)_{t \in [\Lambda]}$ be the vertex interval-membership  sequence of $(G,\tau)$, where without loss of generality that $F_1$ is non-empty.  

For each $i \in [\Lambda]$, we compute a set $L_i \subseteq \{0,\dots,h\} \times F_i^{\{0,1\}}$ consisting of pairs of the form $(r,f)$ where $r$ is an integer between $0$ and $h$ and $f$ is a function from $F_i$ to $\{0,1\}$.  We define $L_i$ to be the set of such pairs $(r,f)$ such that there exists a set $E_i'$ of time-edges with the following properties:
\begin{itemize}
\item $\lvert E_i'\rvert \le k$, and
\item if $R$ denotes the set of vertices reachable from $S$ in $(G,\tau) \setminus E_i$ no later than time $i$, then
\begin{itemize}
\item $\lvert R\rvert \le r$, and
\item for every $v \in F_i$, $v \in R$ if and only if $f(v) = 1$.
\end{itemize} 
\end{itemize}
We say that such a set $E_i'$ of time-edges \emph{witnesses} the element $(r,f)$ for $L_i$.  Note that we may assume without loss of generality that every time-edge in $E_i'$ appears at a time less than or equal to $i$.  For each $(r,f) \in L_i$, we set the \emph{cost} of $(r,f)$, written $\cost_i(r,f)$, to be the minimum cardinality of any set of time-edges witnessing $(r,f)$ for $L_i$.  It is clear from these definitions that we have a yes-instance for $\minreachdelete$ if and only if $L_{\Lambda} \neq \emptyset$, and that in this case the minimum number of edges we need to delete is equal to $\min_{(r,f) \in L_{\Lambda}}\cost_{\Lambda}(r,f)$.

It therefore remains to demonstrate that we can compute all sets $L_i$ within the stated time; in fact, we shall also compute the cost function for each $L_i$ as we will make use of this in computing $L_{i+1}$.  Note first that, for all $i \in [\Lambda]$, $\lvert L_i\rvert \le h2^w$.  Moreover, recall from the definition of the vertex interval-membership  sequence that both endpoints of every edge active at time $i$ belong to $F_i$, and so the number of edges active at time $i$ is at most $\binom{\lvert F_i\rvert}{2} \le \lvert F_i\rvert^2$.

We begin by considering $L_1$, and let $E_1$ be the set of time-edges appearing at time one.  Since $\lvert E_1\rvert \le \lvert F_1\rvert^2 \le w^2$, we can consider all $2^{w^2}$ possibilities for a subset $E_1' \subseteq E_1$ to delete; for each such subset of cardinality at most $k$, it is clear that we can compute in time $\mathcal{O}(\lvert E_1\rvert') = \mathcal{O}(w^2)$ the corresponding pair $(r,f)$ and record $\lvert E_1'\rvert$ as an upper bound for $\cost_1(r,f)$, since every vertex outside $S$ that is reachable from $S$ by time $1$ must belong to $F_1$.  We can therefore compute $L_1$ in time $\mathcal{O}(w^22^{w^2})$.

Now supposing that we have computed $L_i$ and the function $\cost_i : L_i \rightarrow \{0,\dots,k\}$, we will explain how to compute $L_{i+1}$ and the function $\cost_{i+1} : L_{i+1} \rightarrow \{0,\dots,k\}$.  Observe that every vertex reached from $S$ by time $i+1$ that was not already reached by time $i$ must belong to $F_{i+1}$: such a vertex must be reached along an edge appearing at time $i+1$ and so is incident with such an edge.  Moreover, any such vertex $v$ is reached via an element of $F_i \cup S$: either $v$ is reached directly from $S$ via an edge active at time $i+1$, or $v$ is reached from another vertex $u$, which was reached by time $i$; since $u$ is therefore incident with an edge appearing at a time at most $i$ and an edge active at time $i+1$, we conclude that $u \in F_i \cap F_{i+1}$.  

It follows that the set of vertices reachable from $S$ by time $i+1$ can be computed from the set of vertices in $F_i \cap F_{i+1}$ reachable from $S$ by time $i$ together with the set of edges active at time $i+1$.  Writing $E_{i+1}$ for the number of edges active at time $i+1$, we have $\lvert E_{i+1}\rvert \le \binom{w}{2}$, and so we can consider each of the $2^{\binom{w}{2}}$ possibilities for which edges in this set to delete.  For each subset $E_{i+1}' \subseteq E_{i+1}$ and for every state $(r,f) \in L_i$, set 
\begin{align*}
	R_{f,E_{i+1}'} := &\{v \in F_{i+1} \colon \exists s \in S \text{ with } sv \in E_{i+1} \setminus E_{i+1}'\} \\
	&\cup \{v \in F_{i+1} \colon \exists u \in F_i \text{ with } f(u) = 1 \text{ and } uv \in E_{i+1} \setminus E_{i+1}'\}.
\end{align*}
By the previous reasoning, it is clear that $R_{f, E_{i+1}'}$ is precisely the set of vertices reached at time exactly $i+1$ under the assumptions that 
\begin{enumerate}
\item $f^{-1}(1)$ is the set of vertices in $F_i$ reachable by time $i$, and
\item $E_{i+1}'$ is the set of time-edges deleted at time $i+1$.
\end{enumerate} 
For each subset $E_{i+1}' \subseteq E_{i+1}$ and for every state $(r,f) \in L_i$ such that 
\begin{enumerate}
\item $\cost_i(r,f) + \lvert E_{i+1}'\rvert \le k$, and
\item $r + \lvert R_{f,E_{i+1}'}\rvert \le h$,
\end{enumerate} we therefore add to $L_{i+1}$ the state $(r',f')$, where $r' = r + \lvert R_{f,E_{i+1}'}\rvert$ and, for all $v \in F_{i+1}$,
\[
	f'(v) = \begin{cases}
				1	& \mbox{if $v \in R_{f,E_{i+1}'}$} \\
				0	& \mbox{otherwise.}
			\end{cases}
\]	
We also record $\cost_i(r,f) + \lvert E_{i+1}'\rvert$ as an upper bound for $\cost_{i+1}(r',f')$.  After iterating through all possibilities for $E_{i+1}'$ and $(r,f)$, the true value of $\cost_{i+1}(r',f')$ is the least upper bound we have recorded for this state in $L_{i+1}$.

It remains only to bound the time needed to compute $L_{i+1}$ and the associated cost function as described.  To do this, we consider each pair consisting of a subset $E_{i+1}'$ of time-edges to delete, for which there are $\mathcal{O}(2^{\binom{w}{2}})$ possibilities, together with a state $(r,f) \in L_i$, for which there are $\mathcal{O}(h2^w)$.  The total number of pairs we consider is therefore $\mathcal{O}(h2^{w^2})$.  For each such pair, we can compute the corresponding state in $L_{i+1}$ and the upper bound on the associated cost by examining each of the edges in $E_{i+1} \setminus E_{i+1}'$, of which there are $\mathcal{O}(w^2)$.  We therefore compute $L_{i+1}$ and the function $\cost_{i+1}$ in time $\mathcal{O}(w^2h2^{w^2})$.

Summing over all sets $L_i$ for $1 \le i \le \Lambda$ we see that the time needed to compute all states and associated cost functions is $\mathcal{O}(w^2 h 2^{w^2} \Lambda)$, as required.
\end{proof}

\section{Discussion}\label{sec_edge_expl:conclusion}
In this paper we introduced a natural temporal analogue of Eulerian circuits and proved that, in contrast to the static case, $\tempeuler(k)$ is $\np$-complete for all $k\geq 3$. In fact we showed that, for $k \geq 3$, the problem remains hard even when the underlying static graph has path-width $2$, feedback vertex number $1$ or vertex cover number $2$ (Section~\ref{sec_edge_expl:hardness}).

Independently and simultaneously to our work here, Marino and Silva~\cite{marino2021k} showed that  $\tempeuler(k)$ is $\np$-complete for all $k\geq2$ (thus resolving the $k=2$ case which we had left open). 

While proving our hardness results for $\tempeuler(k \geq 3)$, we also resolved an open problem of Akrida, Mertzios and Spirakis \cite{starexp} by showing that $\starexp(k)$ is $\np$-complete for all $k \geq 4$. This result yields a complete complexity dichotomy with respect to $k$ when combined with Akrida, Mertzios and Spirakis' results \cite{starexp}.

Our hardness results rule out $\fpt$ algorithms for $\tempeuler(k)$ and $\starexp(k)$ with respect to many standard parameters describing the structure of the underlying graph (such as path-width, feedback vertex number and vertex-cover number). Motivated by these resutls, we introduced a new width measure which captures structural information that is purely temporal; we call this the \emph{interval-membership-width}. In contrast to our hardness results, we showed that $\tempeuler(k)$ and $\starexp(k)$ can be solved in times $\mathcal{O}(w^3 2^w \Lambda)$ and $\mathcal{O}(w^3 2^{3w} \Lambda)$ respectively where $w$ is our new parameter and $\Lambda$ is the lifetime of the input. 

Our fixed-parameter-tractability results parameterized by interval-membership-width can also be leveraged via a win-win approach to obtain tractability results for both $\tempeuler(k)$ and $\starexp(k)$ parameterized solely by $k$ and the minimum and maximum differences between any two successive times in a time-set of any edge. These resutls allow us to partially resolve another open problem of Akrida, Mertzios and Spirakis concerning the complexity of $\starexp(k)$: we showed that it can be solved in time $2^{\mathcal{O}(k)}\Lambda$ when the input has evenly spaces appearances of each edge and lifetime $\Lambda$. We note, however, that it remains an open problem to determine the complexity of the evenly-spaced $\starexp(k)$ problem when $k$ is unbounded. 

Given the success of parameterizations by interval-membership-width when it comes to temporal edge-exploration problems, it is natural to ask whether such parametrizations can also yield $\fpt$  algorithms for problems involving vertex exploration or reachablity. We showed that the vertex-reachability problem $\minreachdelete$ remains hard even on temporal graphs which have unit interval-membership-width. This motivated us to introduce a `vertex-variant' of our measure called \emph{vertex-interval-membership-width}. This parameter is bounded below by interval-membership-width, but the difference between the two can be arbitrarily large. Parameterizing by this larger parameter, puts $\minreachdelete$ in $\fpt$ and demonstrates the greater algorithmic power of vertex-interval-membership-width. 

Finally we point out that all of our hardness reductions hold also for the case of non-strict temporal walks and, with slightly more work, even our tractability results can be seen to hold for the non-strict case.

\paragraph{Acknowledgements:} The authors would like to thank Samuel Hand and John Sylvester for spotting slight inaccuracies in the preliminary version of this article and the anonymous reviewers for their helpful comments and suggestions.

\bibliography{biblio}
\bibliographystyle{plainurl}

\end{document}